\newcommand{\RN}[1]{
\textup{\uppercase\expandafter{\romannumeral#1}}
}
\newcommand{\squad}{
    \hspace{0.5em}
}
\let\@msm@th@eqref\eqref
\renewcommand{\eqref}[1]{%
  \begingroup
  \leavevmode
  \color{violet}%
  \hypersetup{linkbordercolor=[named]{violet}}%
  \@msm@th@eqref{#1}%
  \endgroup
}
\newtheorem{theorem}{Theorem}
\newtheorem{lemma}{Lemma}
\newtheorem{corollary}{Corollary}
\newtheorem{definition}{Definition}
\newtheorem{note}{Note}
\title{Asymptotics of the number of possible endpoints of a random walk on a directed Hamiltonian metric graph}
\author{D.V. Pyatko, V.L. Chernyshev}
\affil{\textit{National Research University Higher School of Economics (HSE University), Myasnitskaya Street, 20, Moscow, 101000, Russia}}
\date{\today}
\begin{document}

\maketitle

\begin{abstract}
    In this paper, the leading term of the asymptotics of the number of possible
final positions of a random walk on a directed Hamiltonian metric graph
is found. Consideration of such dynamical systems could be motivated by problems of propagation of narrow wave packets on metric graphs.
\end{abstract}

\textit{\textbf{Key words:} counting function, directed graph, dynamical system, Barnes—Bernoulli polynomial}

\tableofcontents

\section{Introduction and problem statement}
\label{sec1}

Let us consider a directed metric graph (one-dimensional cell complex, see the book \cite{KB} and references therein). Each of its edges is a smooth regular curve and has length, as well as permitted direction of movement. We will consider a situation in general position and assume that all edge lengths are linearly independent over the field of rational numbers. We will also fix the vertex, which we will call the starting one. A point leaves it at the initial moment of time (see \cite{Lovasz}, \cite{RW}). At each vertex with non-zero probability, we can select an outgoing edge for further movement. Reversals on the edges are prohibited. To analyze the number of possible endpoints of such a walk, it is useful to assume that all possibilities are realized. Thus, we arrive at the following dynamical system. At the initial moment of time, points begin to move with unit speed from the starting vertex along all the edges that start from it. As soon as one of the points is at the vertex, a new point appears at each incident vertex, which begins to move towards the end of the edge, and the old one disappears. If several points simultaneously come to one vertex at once, then all of them disappear, while new points appear as if one point arrived at the vertex. Our main task is to investigate $N(T)$ --- the number of moving points at time $T$ on various finite compact graphs.

This dynamical system has already been studied for the case of undirected graphs (see \cite{Chernyshev2017}, \cite{RCD}). For the number of moving points, a polynomial approximation was obtained, that is, a description of a polynomial of degree $E - 1$ was given, where $E$ is the number of edges of the graph that approximates $N(T)$ up to a certain power of the logarithm. Consideration of such dynamical systems was motivated by problems of propagation of narrow wave packets on metric graphs and hybrid manifolds (see \cite{ChSh} and references there). Quite recently, for a certain class of directed graphs (one-way Sperner graphs), the leading term of the asymptotics was obtained in the paper \cite{2021}.

In this article, we will consider arbitrary directed Hamiltonian graphs.

\subsection{
The structure of the text
and considered graphs}

\begin{definition}
By $N(T)$ we will denote the number of moving points at time moment $T$, and by $N_{x}(T)$ the number of times $\leq T$ such that at the vertex number $x$ points entered.
\end{definition}

Our main task is to find the asymptotics of $N(T)$ for the graphs, which will be described below. We will consider directed Hamiltonian graphs with $n = |V|$. We assume that the random walk starts from the vertex number $1$ and the lengths of all edges are linearly independent over $\mathbb{Q}$.
Since the graph contains a Hamiltonian cycle, we can renumber the vertices $\{2, \ldots, n\}$ so that there is a cycle $(1, 2, \ldots, n)$. Note that for this graph $N(T)$ will be the same as for the original graph at any time moment.

Recall that the first Betti number of a metric graph (that is, a one-dimensional cell complex) can be found by the formula: $\beta = |E| - |V| + 1$.
We will consider graphs with $\beta \geq 2$. For $\beta = 1$, we obtain a graph-cycle for which $N(T) = 1$ for any time $T > 0$.


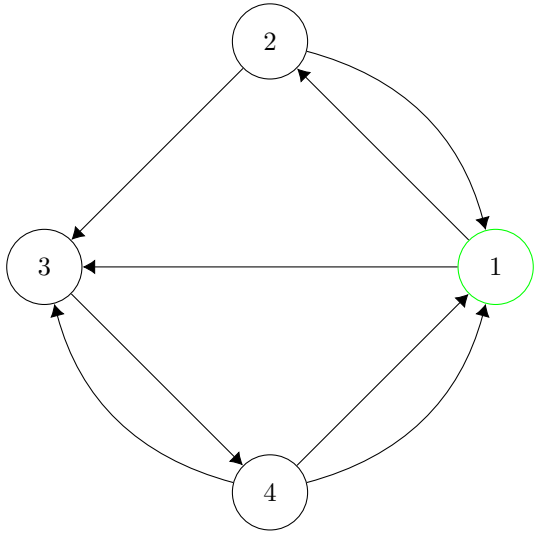
\begin{figure}[!htb]
\begin{center}
\tikzstyle{circ node}=[circle,draw=black]
\tikzstyle{arrow style}=[->,-{Latex[width=2mm]}]
\begin{tikzpicture}[minimum size=1cm, scale=3]
    \node[circ node,draw=green] (1) at (0 * 90:1cm) {$1$};
    \node[circ node] (2) at (1 * 90:1cm) {$2$};
    \node[circ node] (3) at (2 * 90:1cm) {$3$};
    \node[circ node] (4) at (3 * 90:1cm) {$4$};

    \draw[arrow style] (1) edge node {} (2);
    \draw[arrow style] (2) edge node {} (3);
    \draw[arrow style] (3) edge node {} (4);
    \draw[arrow style] (4) edge node {} (1);

    \draw[arrow style] (4) edge[bend right=30] node {} (1);
    \draw[arrow style] (2) edge[bend left=30] node {} (1);
    \draw[arrow style] (4) edge[bend left=30] node {} (3);
    \draw[arrow style] (1) edge node {} (3);
\end{tikzpicture}
\caption{An example of Hamiltonian graph with start vertex}
\label{fig:graph}
\end{center}
\end{figure}

A few words about the further arrangement of work. In Section $3.1$ we show that it suffices to represent $N_{1}(T)$ as
\begin{gather*}
N_{1}(T) = aT^{\beta} + bT^{\beta - 1} + O(T^{\beta - 2}),
\end{gather*}
where $\beta$ is the first Betti number of our complex to obtain the asymptotics of $N(T)$. Sections $2.1, 2.2, 2.3, 2.4$ are devoted to finding the required representation for the class of graphs described above.
Section $2.5$ contains statements that can be used to simplify the calculation of the coefficient. In section $3.2$ we obtain the final formula for $N(T)$ using the results from $3.1$ and $2.1, 2.2, 2.3, 2.4, 2.5$.

\section{Classification of paths to the start vertex}

\subsection{Algorithm}

Since the graph is Hamiltonian and because of the renumbering, for each vertex $i$ there is an edge to the next vertex: for $i \in \{1, \ldots, n - 1\}$ this is $i + 1$, and for $n$ it is $1$. We will call such edges inner (if there are several such edges for the vertex $i$, then we take any). A cycle of inner edges will be called an inner cycle. The rest of the edges will be called outer. A cycle with an outer edge will be called an outer cycle. For each vertex, consider all the edges outgoing from it. We fix the order on them so that the inner edge is the last one.\\

Let us consider an example.


\begin{figure}[!htb]
\begin{center}
\tikzstyle{circ node}=[circle,draw=black]
\tikzstyle{arrow style}=[->,-{Latex[width=2mm]}]
\begin{tikzpicture}[minimum size=1cm, scale=3]
    \node[circ node,draw=green] (1) at (0 * 90:1cm) {$1$};
    \node[circ node] (2) at (1 * 90:1cm) {$2$};
    \node[circ node] (3) at (2 * 90:1cm) {$3$};
    \node[circ node] (4) at (3 * 90:1cm) {$4$};

    \draw[arrow style] (1) edge[draw=red] node {} (2);
    \draw[arrow style] (2) edge[draw=red] node {} (3);
    \draw[arrow style] (3) edge[draw=red] node {} (4);
    \draw[arrow style] (4) edge[draw=red] node {} (1);

    \draw[arrow style] (4) edge[bend right=30] node {} (1);
    \draw[arrow style] (2) edge[bend left=30] node {} (1);
    \draw[arrow style] (4) edge[bend left=30] node {} (3);
    \draw[arrow style] (1) edge node {} (3);
    \begin{scope}[yshift=-1.5cm]
        \node[text width=6cm] at (0, 0) { Inner edges are {\color{red} red}, outer edges are black};
    \end{scope}
    \begin{scope}[xshift=3cm]
        \node[circ node,draw=green] (1) at (0 * 90:1cm) {$1$};
        \node[circ node] (2) at (1 * 90:1cm) {$2$};
        \node[circ node] (3) at (2 * 90:1cm) {$3$};
        \node[circ node] (4) at (3 * 90:1cm) {$4$};

        \draw[arrow style] (1) edge node[left] {{\color{red} 2}} (2);
        \draw[arrow style] (2) edge node[right] {{\color{blue} 2}} (3);
        \draw[arrow style] (3) edge node[right] {{\color{magenta} 1}} (4);
        \draw[arrow style] (4) edge node[left] {{\color{orange} 3}} (1);

        \draw[arrow style] (4) edge[bend right=30] node[right] {{\color{orange} 1}} (1);
        \draw[arrow style] (2) edge[bend left=30] node[right] {{\color{blue}1}} (1);
        \draw[arrow style] (4) edge[bend left=30] node[left] {{\color{orange} 2}} (3);
        \draw[arrow style] (1) edge node[auto] {{\color{red} 1}} (3);
        \begin{scope}[yshift=-1.5cm]
            \node[text width=6cm] at (0, 0) {A possible edge order. The last edge is an inner edge.};
        \end{scope}
    \end{scope}
\end{tikzpicture}
\caption{Inner and outer edges, edge order}
\end{center}
\end{figure}

\FloatBarrier

All inner edges make up the Hamiltonian cycle.

\begin{definition} $ $

        Let $\mathcal{H}$ be the set of all such graphs $H$ that have the same edges and vertices as in $G$, but a non-negative integer is written on each edge so that for any vertex the sum of numbers on the incoming edges is equal to the sum of the numbers on the outgoing ones.
\end{definition}

The addition of two graphs $a \in \mathcal{H}$ and $b \in \mathcal{H}$ is defined in a natural way: in the resulting graph, on each edge, the sum of the numbers of the corresponding edges of the graphs $a$ and $b$ is written.
The multiplication of $a \in \mathcal{H}$ by an integer constant $\alpha$ is defined as the multiplication of each edge in $a$ by $\alpha$.

The number corresponding to an edge can indicate how many times it can be passed along. Equality of the sum of incoming and outgoing means that each vertex was entered and left the same number of times. The set $\mathcal{H}$ will be needed further in order to describe the algorithm on graphs from this set.

Let us present an algorithm that splits $H \in \mathcal{H}$ into cycles of a certain type.
Let $H'= H$. In $H'$, the numbers on the edges and the marks will change during the running of the algorithm.
\begin{center}
    \textit{Algorithm for splitting into cycles}
\end{center}

\begin{center}
\fbox{\begin{minipage}{40em}
\texttt{Step 1. Starting from the vertex $1$, each time we select the first unmarked edge and walk along it. When we arrive at a vertex where we have already been, we let $c$ be the formed cycle.} \\

\texttt{Step 2. While in the graph $H'$ all edges along the cycle $c$ are greater than $0$, subtract $1$ along them and take the cycle $c$ in the result.} \\

\texttt{Step 3. If the cycle $c$ is inner and zero, then we stop. Otherwise, mark the first zero outer edge in the cycle $c$. Go back to step $1$.} \\
\end{minipage}}
\end{center}

\begin{figure}[!htb]
\begin{center}
\includegraphics[scale=0.2]{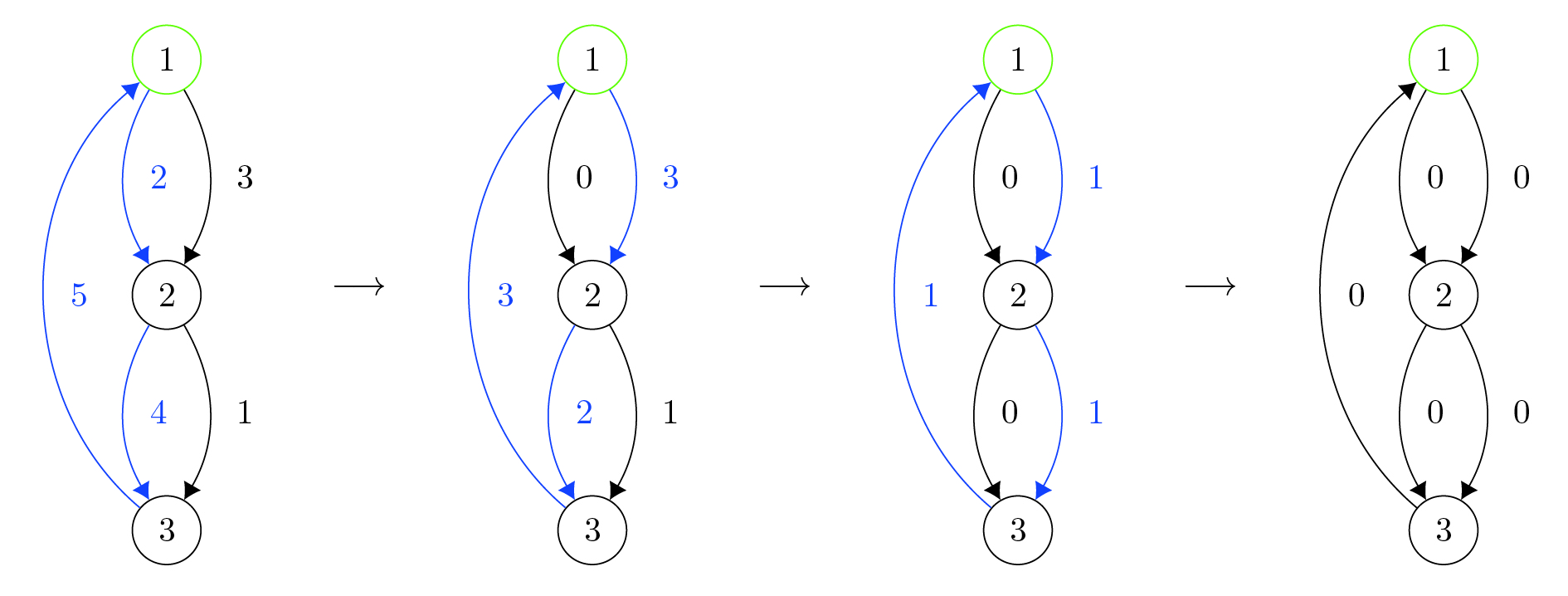}
\end{center}
\caption{An example of how the algorithm works.}
\end{figure}

The graph $H'$ changes during \texttt{step 2} when numbers on the edges are getting less and \texttt{step 3} when marks on the edges appear.

During \texttt{step 2}, the cycle $c$ is taken in the result without numbers on the edges.

\begin{lemma}[Correctness of the algorithm for dividing into cycles]
\label{lemma:algo}

Suppose we run the algorithm on $H \in \mathcal{H}$, and $c$ is any cycle that is obtained at the end of \texttt{step 2},

$H' \in \mathcal{H}$ --- what is "left" of $H$ after several steps of the algorithm.
\begin{enumerate}
\item If after \texttt{step 2} the resulting cycle $c$ is not inner or zero, then it has a zero outer edge.
\item The algorithm will split $H$ into cycles so that all edges in $H'$ at the end of the algorithm will be zero.
\end{enumerate}
\end{lemma}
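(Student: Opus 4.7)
The plan is to exploit two invariants maintained throughout the algorithm. First, $H'$ stays in $\mathcal{H}$: step 2 subtracts the indicator of a cycle, which preserves flow balance at every vertex, and step 3 only alters marks. Second, every marked edge carries value $0$, since an edge is only ever marked immediately after being zeroed in step 2. I will also use the convention that step 1 walks only along unmarked edges of positive value, so every edge of $c$ is positive just before step 2.

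For part 1 I would argue by contradiction. Let $m \geq 1$ be the minimum value on $c$ just before step 2, so that after step 2 the zero edges of $c$ are exactly those whose pre-value equaled $m$. Suppose every outer edge of $c$ still has value $> 0$ after step 2; then every zero edge of $c$ is inner. Pick such an inner zero edge $e : u \to u+1$ of $c$. In step 1 the walk chose $e$ at $u$ as the first unmarked edge, and since the ordering places the inner edge last, every outer edge leaving $u$ must already be marked and therefore have value $0$. Together with $e$ now being zero, every outgoing edge of $u$ has value $0$ after step 2, so by flow conservation in the updated $H' \in \mathcal{H}$ every incoming edge of $u$ has value $0$ as well. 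The predecessor of $u$ in $c$ enters $u$ along such a zero edge, which by the contradiction hypothesis cannot be outer, and is therefore the inner in-edge to $u$, forcing the predecessor to be $u-1$. Iterating this argument at $u-1$, then $u-2$, and so on, marches backwards all the way around the inner Hamiltonian cycle and forces $c$ to coincide with it, contradicting the presence of an outer edge in $c$.

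For part 2, termination is immediate: every pass through step 3 that does not halt marks one previously unmarked outer edge, and there are only finitely many outer edges. At the moment the algorithm halts, the cycle $c$ of the last iteration is the inner Hamiltonian cycle, which is possible only if at every vertex every outer edge is already marked and hence already zero. The stopping condition further ensures that the inner edges themselves drop to $0$ in $H'$ after the final step 2, so $H'$ vanishes identically. The invariant $H = H' + (\text{sum of cycles collected so far})$, preserved because step 2 transfers exactly a cycle's worth from $H'$ into the output while step 3 changes nothing, then yields that the cycles emitted by the algorithm sum to $H$.

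I expect the main obstacle to be the backward-tracing argument in part 1: it requires three ingredients to line up simultaneously --- the ordering convention placing the inner edge last, the fact that marked edges carry value $0$, and flow conservation in $H'$ --- in order to propagate ``all outgoing values equal $0$'' from each vertex of $c$ to its predecessor in $c$, and one must check that the iteration closes consistently along the simple cycle $c$. Once that backbone is in place, termination and the decomposition identity in part 2 are routine bookkeeping.
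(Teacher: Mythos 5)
Your proof is correct and takes essentially the same route as the paper's: both arguments pick a zero inner edge $e=(u,u+1)$ of $c$, use the ordering (inner edge last) together with the invariant that marked edges are zero to conclude that all edges leaving $u$ vanish, invoke flow conservation in $H'\in\mathcal{H}$ to push this to the incoming edges of $u$, and then propagate backwards along the inner Hamiltonian cycle; part 2 is the same termination-by-marking argument in both. Two small points worth tightening: your convention that step~1 walks only along \emph{positive} unmarked edges departs from the paper's ``first unmarked edge'' rule and can deadlock for general $H\in\mathcal{H}$ (e.g.\ when all edges at vertex $1$ are already zero but a circulation survives elsewhere), although the backward-propagation step itself goes through unchanged under the paper's rule; and the contradiction you state at the end of part~1, ``the presence of an outer edge in $c$,'' only covers the case $c$ not inner, whereas the paper separately rules out ``$c$ inner and nonzero'' (needed since an inner $c$ has no outer edge at all to exhibit) --- your propagation in fact also forces every edge of $c$ to be zero, so you do obtain ``$c$ inner and zero,'' but you should state that this, rather than merely the existence of an outer edge, is what contradicts the hypothesis.
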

\begin{proof}
To prove the assertions, we need a simple but useful observation that the algorithm retains the property that at any vertex the sum of incoming edges is equal to the sum of outgoing ones.

First, let us show that if the cycle $c$ is inner, then it is zero. Suppose this is not the case, then the cycle $c$ contains at least one
nonzero edge. Since $c$ is inner, then all outer edges are already marked, hence all outer edges are zero.
Due to the invariance of the equality of the sums of incoming and outgoing numbers on the edges, we get that nonzero numbers are written in $c$ on all edges, but this contradicts
the choice of $c$, which, according to \texttt{step 2}, must have a zero edge.

Thus, it is sufficient to consider the case when the cycle $c$ is not inner.
Let $e = (v, u)$ be an edge with zero weight in $c$. If it is outer, then everything is proven. Let's assume it's inner. Since the inner edge is the last one outgoing from the vertex, the edge $e$ is the last one from $v$, moreover, the number on $e$ is $0$, so all edges from $v$ have zero numbers (the edges up to $e$ have zero numbers because they are marked, and only edges with zero numbers are marked). We again use invariance: all edges in $v$ are also zero and, accordingly, the edge $e'= (w, v)$ of the cycle $c$ is also zero. If it is outer, then it is proven, otherwise it is inner and similar reasoning can be continued. In the reasoning, we go first from the vertex $v$ to $v - 1$, then $v - 2, \ldots, 1$, $n, \ldots, v + 1$ and show that either the next edge is inner and we continue, or it is an outer edge and we stop. If no edge turns out to be outer, then $c$ is an inner cycle, which contradicts the assumption.

The reasoning just carried out shows that $\texttt{step 3}$ is correct.
The algorithm ends due to the finitness of the edges and the appearance of more and more new marks after each step.
After the completion of the algorithm, $H'$ will be zero, since all the outer edges will be marked and therefore zero, and the inner edges that make up the inner cycle will be zero due to the stopping condition.
\end{proof}

\begin{note}
    \label{note:beta}
    For any $H \in \mathcal{H}$, there will be exactly $\beta$ iterations (\texttt{steps 1}) of the algorithm.
\end{note}
\begin{proof}

    Each iteration, except for the last one, can be bijectively associated with the outer edge, which was marked at \texttt{step 3}.
    There are $|E| - |V|$ outer edges in total, but it remains to take into account the last iteration with an inner cycle, thus there are $|E| - |V| + 1 = \beta$ iterations.
\end{proof}

\subsection{Tuples of generated cycles} $ $

When we discussed splitting $H \in \mathcal{H}$ into cycles using the algorithm, we said that we get in the answer a set consisting of some cycles $c$.
Let's understand how to describe the resulting sets of cycles. For this, we introduce a definition.

In all further notation, we assume that $c_{i}$ is a simple cycle in the graph $G$.

\begin{definition}
    \label{def:generated}
    The tuple $(c_{1}, c_{2}, \ldots, c_{k})$ is called the complete tuple of generated cycles if
    \begin{enumerate}
        \item $c_{1}$ is obtained using $\texttt{step 1}$ without marked edges.
        \item $c_{k}$ is the inner cycle.
        \item It holds that $c_{1} \xrightarrow{v_{1}} c_{2} \xrightarrow{v_{2}}c_{3}\xrightarrow{v_{3}} \ldots, \xrightarrow{v_{k - 1}}c_{k}$.
        The notation means the following: $v_{j}$ lies in the cycle $c_{j}$, $c_{j + 1}$ was obtained from $c_{j}$ by putting a mark on the first unmarked outer edge in $v_{j}$ and running $\texttt{step 1}$.
    \end{enumerate}
\end{definition}

In the course of building the set, a graph with marks is maintained in the definition \eqref{def:generated} in the same way as for the algorithm.

\begin{note}
    \label{note:beta_cycles}
    The complete tuple of reachable cycles always contains exactly $\beta$ cycles.
\end{note}
\begin{proof} $ $

    Each cycle, except for the last one, can be bijectively associated with an outer edge that was marked after passing this cycle.
    There are $|E| - |V|$ outer edges in total, but it remains to take into account the last inner cycle, thus, there are $|E| - |V| + 1 = \beta$ cycles in the complete tuple.
\end{proof}

\begin{definition} $ $
    \label{gen}
    \begin{enumerate}
        \item a tuple $(c_{1}, c_{2}, \ldots, c_{k})$ is called a tuple of generated cycles if \\
        $(c_{1}, \ldots, c_{k})$ is a subsequence of some complete tuple of generated cycles.
        \item a tuple $((c_{1}, a_{1}), (c_{2}, a_{2}), \ldots, (c_{k}, a_{k}))$ is called a tuple of weighted generated cycles if $(c_{1}, c_{2}, \ldots, c_{k})$ --- a tuple of generated cycles, $a_{i} \in \mathbb{N}$.
    \end{enumerate}
\end{definition}

\begin{definition} $ $
    \begin{enumerate}
        \item By the time of $H \in \mathcal{H}$ we mean $t(H) = \sum_{i = 1}^{k}{a_{i}t(e_{i})}$, where $a_{i}$ is the number on the $i$-th edge.
        \item Time of the tuple of weighted generated cycles\\
        $c = ((c_{1}, a_{1}), (c_{2}, a_{2}), \ldots, (c_{k}, a_{k}))$ is called $t(c) = \sum_{i = 1}^{k}{a_{i}t(c_{i})}$.
    \end{enumerate}
\end{definition}

\begin{lemma}
    \label{lemma:sigma}
    The algorithm defines a mapping from $H \in \mathcal{H}$ to a tuple of weighted generated cycles $c$ such that $t(H) = t(c)$.
\end{lemma}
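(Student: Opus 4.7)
The approach I would take is to run the algorithm on a given $H \in \mathcal{H}$ and read off both the tuple of cycles and the weights directly from its execution. By Note~\ref{note:beta} the algorithm performs exactly $\beta$ executions of \texttt{step 1}, producing cycles $c_1, c_2, \ldots, c_\beta$; I define $a_i \geq 0$ to be the number of times the \texttt{while}-loop of \texttt{step 2} subtracts along $c_i$. The claim then splits into two independent sub-goals: first, show that (after discarding those $c_i$ with $a_i=0$) the resulting list is a tuple of weighted generated cycles in the sense of Definition~\ref{gen}; second, show $t(H) = \sum_i a_i\, t(c_i)$.

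\textbf{First sub-goal.} I would check the three clauses of Definition~\ref{def:generated} directly against the algorithm for the full list $(c_1,\ldots,c_\beta)$. Clause~1 is immediate: $c_1$ is produced by \texttt{step 1} on its first invocation, when no edges are marked. Clause~2 follows from the stopping rule of \texttt{step 3} combined with part~1 of Lemma~\ref{lemma:algo}, which forces the algorithm to halt exactly when an inner zero cycle is generated. Clause~3 mirrors \texttt{step 3}: after $c_j$ is produced, the first zero outer edge of $c_j$ is marked, and rerunning \texttt{step 1} yields $c_{j+1}$. Thus $(c_1,\ldots,c_\beta)$ is a complete tuple of generated cycles; deleting those with $a_i = 0$ yields a subsequence, which by Definition~\ref{gen}.1 is still a tuple of generated cycles, and attaching the positive weights $a_i$ produces a tuple of weighted generated cycles.

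\textbf{Second sub-goal.} The time equality reduces to a simple edge-level double count. Every execution of \texttt{step 2} along $c_i$ decreases $H'(e)$ by $1$ for each $e\in c_i$. Since part~2 of Lemma~\ref{lemma:algo} guarantees $H'\equiv 0$ at termination, for every edge $e$ we must have $H(e) = \sum_{i:\,e\in c_i} a_i$. Multiplying by $t(e)$, summing over edges, and exchanging the order of summation gives
\[
t(H) \;=\; \sum_{e\in E} H(e)\, t(e) \;=\; \sum_{e\in E} t(e) \sum_{i:\,e\in c_i} a_i \;=\; \sum_{i=1}^{\beta} a_i \sum_{e\in c_i} t(e) \;=\; \sum_{i} a_i\, t(c_i) \;=\; t(c).
\]

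\textbf{Main obstacle.} The arithmetic in the second sub-goal is routine; the real delicate point is in the first sub-goal, namely matching ``the first zero outer edge in the cycle $c$'' from \texttt{step 3} with ``the first unmarked outer edge in $v_j$'' from clause~3 of Definition~\ref{def:generated}. This identification rests on the invariants that the algorithm only ever marks zero edges and that \texttt{step 1}, which always picks the first unmarked edge at each vertex, implies that the outer edge selected for marking is in fact the first unmarked outer edge at its tail. Verifying this correspondence is the only non-mechanical step of the proof; once it is established, everything else follows from Lemma~\ref{lemma:algo} and a direct application of the definitions.
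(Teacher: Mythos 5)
Your proposal is correct and matches the paper's approach: run the algorithm on $H$, read off the cycles $c_1,\ldots,c_\beta$ (Note~\ref{note:beta}) and their repeat-counts $a_i$, invoke Lemma~\ref{lemma:algo} for the time equality, observe the result is a complete tuple of generated cycles, and discard the zero weights. You merely spell out two things the paper leaves implicit: the edge-level double count $H(e)=\sum_{i:\,e\in c_i}a_i$ behind the appeal to Lemma~\ref{lemma:algo}, and the identification of step~3's ``first zero outer edge in $c$'' with Definition~\ref{def:generated}'s ``first unmarked outer edge in $v_j$'' — both elaborations are sound.
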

\begin{proof} $ $

    Let's run the algorithm on $H$ and get a tuple of cycles

    $((c_{1}, a_{1}), (c_{2}, a_{2}), \ldots, (c_{\beta}, a_{\beta})), \squad a_{i} \in \mathbb{N} \cup \{0\}$.

    Here all cycles are taken from the corresponding iteration of the algorithm. In total, there are $\beta$ cycles due to note \eqref{note:beta}.
    By the lemma \eqref{lemma:algo} we obtain that $\sum_{i = 1}^{\beta}{a_{i}t(c_{i})} = t(H)$.
    Note that for $(c_{1}, \ldots, c_{\beta})$ all conditions of the complete tuple of generated cycles are satisfied.
    Let's drop the cycles with $a_{i} = 0$ and get a tuple of weighted generated cycles.
\end{proof}

By the lemma \eqref{lemma:sigma}, the algorithm defines a mapping that associates $H \in \mathcal{H}$ with a tuple of weighted generated cycles $c$. We will call this mapping $\sigma$.

Let $T = \sum_{i = 1}^{|E|}{a_{i}t(e_{i})}, \squad a_{i} \in \mathbb{N} \cup \{0\}$. Let's construct a graph $H$, in which $a_{i}$ is written on the edge $e_{i}$. Let us define $\omega(T) = H$.

Let us introduce the function $\mu = \sigma \circ \omega$. The $\sigma$ function "starts" the algorithm on the graph $H \in \mathcal{H}$ and produces the result. The function $\omega$ gives a graph by time, which has information about times each edge has been passed. In the definition of $\omega$, the graph exists due to the choice of $T$ on which the function $\omega$ is defined and $\omega(T)$ is uniquely defined due to the linear independence over $\mathbb{Q}$ of the edges.
It should be noted that the function $\omega$ is not defined for all times, but only for times of a special type (linear combinations of edge lengths with non-negative integer weights).

\begin{note}
    \label{note:time_inv}
    The function $\mu$ preserves time, that is $t(\mu(T)) = t(\sigma(\omega(T))) = T$, if $\omega(T) \in \mathcal{H}$ and $T = \sum_{i = 1}^{|E|}{a_{i}t(e_{i})}, \squad a_{i} \in \mathbb{N} \cup \{0\}$.
\end{note}
\begin{proof}
    It immediately follows from the fact that $t(\omega(T)) = T$ and the lemma \eqref{lemma:sigma}.
\end{proof}
\begin{definition}
    Let $\mathcal{T}$ be the set of all entry times of the starting vertex.
\end{definition}

\begin{lemma}
    \label{lemma:generated}
    For any time $T \in \mathcal{T}$, there is a tuple of weighted generated cycles $c = \mu(T)$ such that $t(c) = T$.
\end{lemma}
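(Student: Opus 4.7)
My plan is to unfold the definition $\mu = \sigma \circ \omega$ and verify that $\omega(T)$ is well-defined and lies in $\mathcal{H}$ whenever $T \in \mathcal{T}$; once that is established, applying $\sigma$ yields the desired tuple and the equality $t(c)=T$ is immediate from Note~\ref{note:time_inv}.

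First, I would use the fact that $T \in \mathcal{T}$ means a moving point physically arrives at vertex~$1$ at time~$T$. Tracing the history of this point backwards through the dynamical system gives a closed walk in $G$ starting at vertex~$1$ and returning to vertex~$1$ at time~$T$. If $a_i$ denotes the number of times this walk traverses edge $e_i$, then $a_i \in \mathbb{N}\cup\{0\}$ and $T = \sum_{i=1}^{|E|} a_i\, t(e_i)$, which is exactly the form required for $\omega$ to be defined. Moreover, since the edge lengths are linearly independent over $\mathbb{Q}$, the integers $a_i$ are uniquely determined by $T$, so the graph $\omega(T)$ is well-defined.

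Next I would check that $\omega(T) \in \mathcal{H}$, i.e., at every vertex the sum of the numbers on incoming edges equals the sum on outgoing ones. For any intermediate vertex on the walk, every entry is immediately followed by an exit, so the counts agree. At the starting vertex~$1$, the walk begins by leaving and ends by returning, so the number of departures equals the number of arrivals; hence the conservation law holds at vertex~$1$ as well. This places $\omega(T)$ in $\mathcal{H}$.

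With $\omega(T)\in\mathcal{H}$ in hand, Lemma~\ref{lemma:sigma} guarantees that $\sigma$ produces a tuple of weighted generated cycles $c = \sigma(\omega(T)) = \mu(T)$, and Note~\ref{note:time_inv} gives $t(c) = T$. The only non-routine point in this chain is justifying the passage from "$T$ is an entry time" to "there exists a closed integral-coefficient walk of duration $T$", which I expect to be the main obstacle; it is handled cleanly by the linear independence hypothesis, which makes the edge-count vector of any such walk uniquely determined by $T$ and forces consistency with the dynamical description.
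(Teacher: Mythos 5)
Your proof is correct and follows essentially the same route as the paper: set $H=\omega(T)$, observe $H\in\mathcal{H}$, apply Lemma~\ref{lemma:sigma} and Note~\ref{note:time_inv}. The only difference is that the paper states ``Since $T\in\mathcal{T}$, then $H\in\mathcal{H}$'' without justification, whereas you fill in the (correct) argument — tracing the history of the arriving point gives a closed walk from vertex~$1$ to vertex~$1$, its edge-traversal counts give a valid decomposition of $T$, linear independence makes that decomposition unique and hence equal to $\omega(T)$, and conservation at each vertex of a closed walk places $\omega(T)$ in $\mathcal{H}$.
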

\begin{proof}
    Let us take $H = \omega(T)$. Since $T \in \mathcal{T}$, then $H \in \mathcal{H}$. By the lemma \eqref{lemma:sigma}, $\sigma$ is applicable to $H$. We can take $c = \mu(T)$.
    According to the note \eqref{note:time_inv} $t(c) = t(\mu(T)) = T$.
\end{proof}

\subsection{Tuples of reachable cycles} $ $

The lemma \eqref{lemma:generated} shows that we can match the time $T \in \mathcal{T}$ to a tuple of weighted generated cycles $c$ with the time $T$ using $\mu$, but the converse is not true. That is, if we take an arbitrary tuple of weighted generated cycles $c$, then $\mu^{- 1}(c)$ does not necessarily exist.

Now we want to fix this and make it so that between $\mathcal{T}$ and some set of tuples of weighted generated cycles there is a bijection specified by the function $\mu$. Note that there is already an injection. Indeed, different $T_{1} \neq T_{2}$ will give $\mu(T_{1}) \neq \mu(T_{2})$, because $t(\mu(T_{1})) = T_{1}$, $t(\mu(T_{2})) = T_{2}$ and $T_{1} \neq T_{2}$ by assumption. For bijection, it remains to obtain a surjection, but it is simple to do it: we can restrict the set of all tuples of weighted generated cycles to $\mu(\mathcal{T})$.
It turns out that there is some kind of "good" set $\mu(\mathcal{T})$ of tuples of weighted generated cycles.

Let's now understand how to calculate $N_{1}(T)$ using $\mu(\mathcal{T})$.

\begin{note}
    \label{note:time_to_reachable}
    For any time $T$ it is true that
    \begin{gather*}
        N_{1}(T) = \#\{t \in \mathcal{T}: t \leq T\} = \#\{c \in \mu(\mathcal{T}): t(c) \leq T\}.
    \end{gather*}
\end{note}
\begin{proof}
    The first equality holds by the definition of $N_{1}(T)$.
    The second holds due to the fact that $\mu$ is a bijection between $\mathcal{T}$ and $\mu(\mathcal{T})$, which preserves time.
\end{proof}

Thus, it makes sense to introduce a new type of tuples of weighted generated cycles.
For this type we will require $c \in \mu(\mathcal{T})$.
However, let's introduce this type a little differently to make it more convenient to use the definition in proofs, and then prove the equivalence to what we wrote in the previous sentence.

\begin{definition}
    \label{def:reachable_weighted}

    The tuple $c = ((c_{1}, a_{1}), (c_{2}, a_{2}), \ldots, (c_{k}, a_{k}))$ is a tuple of weighted reachable cycles if and only if
    \begin{enumerate}
         \item The tuple $c$ is a tuple of weighted generated cycles.
         \item $\mu(t(c)) = c$.
         \item $t(c) \in \mathcal{T}$.
    \end{enumerate}
\end{definition}

\begin{note} $ $
    \label{note:equiv}
    The following two conditions are equivalent: $c$ is a tuple of weighted reachable cycles $\Leftrightarrow$ $c \in \mu(\mathcal{T})$.
\end{note}
\begin{proof} $ $
    Necessity: it is enough to show $c \in \mu(\mathcal{T})$, but $c = \mu(t(c)), \squad t(c) \in \mathcal{T}$.

    Sufficiency:  $c \in \mu(\mathcal{T}) \Rightarrow c = \mu(t'), t' \in \mathcal{T}$, but $t(c) = t(\mu(t')) = t' \Rightarrow \mu(t(c)) = c, \squad t(c) \in \mathcal{T}$.

\end{proof}

The meaning of condition $2$ in the definition \eqref{def:reachable_weighted} is that the tuple $c$ will be obtained as a result of the algorithm on $\omega(t(c))$. In this sense, it is "reachable".

\begin{lemma}
    \label{lemma:induction}
    If $c = ((c_{1}, a_{1}), (c_{2}, a_{2}), \ldots, (c_{k}, a_{k}))$ --- a tuple of weighted reachable cycles, then

    \noindent
    $c' = ((c_{1}, b_{1}), (c_{2}, b_{2}), \ldots, (c_{k}, b_{k})), \squad b_{i} \in \mathbb{N}$ is also a tuple of weighted reachable cycles.
\end{lemma}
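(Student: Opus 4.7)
The plan is to verify in turn the three conditions of the definition of a tuple of weighted reachable cycles. Condition 1 is immediate: the underlying cycle sequence $(c_{1},\dots,c_{k})$ is unchanged, and the new weights $b_{i}\in\mathbb{N}$ are still positive integers, so $c'$ remains a tuple of weighted generated cycles.

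For condition 2, namely $\mu(t(c'))=c'$, the plan is to run the algorithm on $\omega(t(c'))$ in parallel with the algorithm on $\omega(t(c))$ and prove, by induction on the iteration number, that the two executions visit the same sequence of cycles and place the same marks at \texttt{step 3}, differing only in that whenever an iteration produces cycle $c_{j}$, the multiplicity subtracted at \texttt{step 2} is $b_{j}$ instead of $a_{j}$. The inductive engine is the observation that at the start of any iteration the weight of an edge $e$ on the current cycle equals $\sum_{j\ \text{not yet processed}} w_{j}\,[e\in c_{j}]$, where $w_{j}$ denotes the multiplicity $a_{j}$ in one run and $b_{j}$ in the other. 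Since all $a_{j}$ and $b_{j}$ are strictly positive, the edges attaining the minimum weight --- and hence those that become zero after the subtraction --- are exactly the edges of the current cycle that do not belong to any later unprocessed $c_{j}$. This is a purely combinatorial condition on the cycle-sharing structure, identical in the two runs, so the first zero outer edge marked at \texttt{step 3} coincides and the induction advances. After all $\beta$ iterations the two executions yield the same ordered list of cycles, and dropping zero-multiplicity entries on the $\omega(t(c'))$ side returns exactly $c'$, proving $\mu(t(c'))=c'$.

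For condition 3, namely $t(c')\in\mathcal{T}$, the plan is to compare the edge supports of $\omega(t(c))$ and $\omega(t(c'))$: since all $a_{j}$ and $b_{j}$ are positive, both graphs have support equal to $\bigcup_{j=1}^{k}c_{j}$. The hypothesis $t(c)\in\mathcal{T}$ supplies a closed walk from vertex $1$ of time $t(c)$ that uses the edges of $\omega(t(c))$ with their recorded multiplicities, which forces this common support to be strongly connected and to contain vertex $1$. Since $\omega(t(c'))\in\mathcal{H}$ furnishes flow conservation and the support is unchanged, the standard Eulerian-circuit criterion for directed multigraphs then produces a closed walk from vertex $1$ in $\omega(t(c'))$; its time is $t(c')$, establishing $t(c')\in\mathcal{T}$.

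The principal technical point is the induction in condition 2: one must maintain precise control over the edge weights at the start of every iteration and verify that the description of the minimum-weight edges is insensitive to the specific values of $a_{j}$ and $b_{j}$, depending only on the combinatorial pattern of which cycles share which edges. Once this is in place, conditions 1 and 3 follow cleanly from support and positivity considerations.
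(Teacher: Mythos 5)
Your overall strategy coincides with the paper's: verify the three defining conditions, run the algorithm in parallel on $\omega(t(c))$ and $\omega(t(c'))$ by induction on iterations for condition 2, and invoke the Eulerian-circuit criterion for condition 3. Where the paper tracks the intermediate graphs $G_c^i$, $G_d^i$ and mark sequences $V_i$, $U_i$ explicitly and pins the multiplicity down by examining the single edge marked by $V_{i+1}^1$, you instead characterize the full set of edges that become zero after \texttt{step 2}; the two formulations are interchangeable once the characterization is justified.

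That justification, however, contains a gap. You assert that because all $a_j$, $b_j$ are strictly positive, ``the edges attaining the minimum weight $\ldots$ are exactly the edges of the current cycle that do not belong to any later unprocessed $c_j$.'' Positivity alone does not give this. If every edge of the current cycle $c_i$ lay in some later $c_j$, the minimum would be attained on edges that \emph{do} lie in later cycles, and which ones attain it would depend on the actual weight values: two edges $e_1\in c_{j_1}$, $e_2\in c_{j_2}$ of $c_i$, otherwise in no later cycle, have weights $w_i+w_{j_1}$ versus $w_i+w_{j_2}$, and the comparison flips with the choice of $w_{j_1}$, $w_{j_2}$. Your characterization is correct only because there \emph{does} exist an edge of $c_i$ avoiding all later $c_j$, and this existence is exactly what the hypothesis $\mu(t(c))=c$ supplies: were there no such edge, the algorithm on $\omega(t(c))$ would subtract $c_i$ strictly more than $a_i$ times, contradicting reachability of $c$. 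Once this is stated, your induction is sound; the paper sidesteps the issue by directly using the edge carrying the mark $V_{i+1}^1$, which is such a witness by construction. Your support-and-Eulerian argument for condition 3 matches the paper's and is fine.
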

\begin{proof} $ $

    The first condition is fulfilled by the definition of a tuple of weighted generated cycles.
    Note that condition $3$ of the definition for $c$ is equivalent to the fact that $\omega(t(c))$ is an Euler graph containing the vertex $1$ if the number of passes along the edge is required to be equal to the number on the edge.
    Since $\omega(t(c))$ and $\omega(t(c'))$ have the same set of nonzero edges, then, by the theorem on the Euler property of a graph, $\omega(t(c'))$ is also Euler.

    It remains to show that the second condition is satisfied. Let $d = \mu(t(c'))$, $d = ((d_{1}, x_{1}), \ldots, (d_{k}, x_{k}))$. We need to show $c' = d$.
    Since $c$, $d$ --- are tuples of weighted generated cycles, the following representation is possible:
    \begin{align*}
         &\xrightarrow{V_{0}} (c_{1}, a_{1}) \xrightarrow{V_{1}} (c_{2}, a_{2}) \ldots (c_{k - 1}, a_{k - 1}) \xrightarrow{V_{k - 1}} (c_{k}, a_{k}), \squad V_{i} = (V_{i}^{1} \ldots V_{i}^{s_{i}}) \lor \varnothing \\
         &\xrightarrow{U_{0}} (d_{1}, x_{1}) \xrightarrow{U_{1}} (d_{2}, x_{2}) \ldots (d_{w - 1}, x_{w - 1}) \xrightarrow{U_{w - 1}} (d_{w}, x_{w}), \squad U_{i} = (U_{i}^{1} \ldots U_{i}^{p_{i}}) \lor \varnothing \\
    \end{align*}

    Here $V_{i}$ and $U_{i}$ denote a set of vertices, on the edges of which marks occurred between the cycles $c_{i - 1}$ and $c_{i}$ or $d_{i - 1}$ and $d_{i}$.

    Let the algorithm reach $(d_{i}, x_{i}), \squad i \in \{1, \ldots, w\}$ in $d$, take the cycle $d_{i}$ $x_{i}$ times and stop, then
    \begin{enumerate}
        \item $i \leq k$, $\forall j \leq i \squad d_{j} = c_{j}$, $x_{j} = b_{j}$, \\
        \item The state of the algorithm (for all vertices the first unmarked edge) is equal to the state of the algorithm if it went to $(c_{i}, a_{i})$ in $c$, took the cycle $c_{i}$ $a_{i}$ times and stopped.
    \end{enumerate}
    Let us prove this by induction. The induction base for $i = 0$: the first condition is obviously satisfied, and the second holds, since the state is initially the same.
    Let us discuss the induction transition. Let everything be satisfied for $i$, let us prove for $i + 1$:
    Let $G_{c} = \omega(t(c)), \squad G_{d} = \omega(t(d)) = \sum_{j = 1}^{w}{x_{j}d_{j}}$. By definition $d$, $G_{d} = \omega(t(c'))$. Also, let $G_{d}^{i}$ be the graph that is obtained after passing the algorithm to $(d_{i}, x_{i})$ inclusive, $G_{c}^{i}$ is defined similarly.
    Note the following:
    \begin{align*}
        G_{d}^{i} &= \sum_{j = i + 1}^{w}{x_{j}d_{j}} = \sum_{j = 1}^{w}{x_{j}d_{j}} - \sum_{j = 1}^{i}{x_{j}d_{j}} \\
                  &= \left[\text{By the induction hypothesis}\right] \sum_{j = 1}^{w}{x_{j}d_{j}} - \sum_{j = 1}^{i}{b_{j}c_{j}}  = G_{d} - \sum_{j = 1}^{i}{b_{j}c_{j}} \\
                  &= \omega(t(c')) - \sum_{j = 1}^{i}{b_{j}c_{j}} = \sum_{j = 1}^{k}{b_{j}c_{j}} - \sum_{j = 1}^{i}{b_{j}c_{j}} = \sum_{j = i + 1}^{k}{b_{j}c_{j}}. \\
    \end{align*}
    Since $G_{d}^{i} = \sum_{j = i + 1}^{k}{b_{j}c_{j}}$ contains edges with nonzero weight, then $k \geq i + 1$.
    Note that since $G_{c}^{i} = \sum_{j = i + 1}^{k}{a_{j}c_{j}}$, then in $G_{c}^{i}$ and $G_{d}^{i}$ the set of edges with nonzero weight coincides. Consider $V_{i} = (V_{i}^{1} \ldots V_{i}^{s_{i}})$ and $U_{i} = (U_{i}^{1} \ldots U_{i}^{p_{i}})$. Since the states are the same after $(c_{i}, a_{i})$ in $c$ and $(d_{i}, x_{i})$ in $d$ and the set of edges with nonzero weight is the same, then the cycles in that moment will be the same and the mark will go to the same edges, that is, $V_{i}^{1} = U_{i}^{1}$. Note that again the state is the same, we have not changed the numbers on the edges, so the set of edges with zero weight also coincides, that is $V_{i}^{2} = U_{i}^{2}$. It cannot be that $s_{i} \neq p_{i}$, because then in some cycle there would be an edge with zero weight, but in the same cycle but in a different set it would have a nonzero weight, that is $V_{i} = U_{i}$. In other words, we have come to $(c_{i + 1}, a_{i + 1})$ and $(d_{i + 1}, x_{i + 1})$. Since $V_{i} = U_{i}$, the state is the same, that is $c_{i + 1} = d_{i + 1}$.

    $G_{d}^{i} = b_{i + 1}c_{i + 1} + \sum_{j = i + 2}^{k}{b_{j}c_{j}}$. If $k = i + 1$, then $G_{d}^{i} = b_{i + 1}c_{i + 1}$ and $x_{i + 1} = b_{i + 1}$, otherwise, let us look at the edge marked by $V_{i + 1}^{1}$. It is in the cycle $c_{i + 1}$ and is used exactly $b_{i + 1}$ times. All other edges of the cycle are used $\geq b_{i + 1}$ times. That is, the cycle $c_{i + 1}$ will be subtracted exactly $b_{i + 1}$ times, $x_{i + 1} = b_{i + 1}$.

    \medskip

    We get that $k \geq w, \squad \forall j \leq w$ $c_{j} = d_{j}, \squad x_{j} = b_{j}$. That is, $d$ is a prefix of $c'$, if $k > w$,
    then $\omega(t(c')) \neq \omega(t(d))$, thus $k = w$ and $c' = d$.
\end{proof}

\begin{definition}
    \label{def:reachable}
    a tuple $(c_{1}, c_{2}, \ldots, c_{k})$ is called a tuple of reachable cycles if $((c_{1}, 1), (c_{2}, 1), \ldots, (c_{k}, 1))$ is a tuple of weighted reachable cycles.
\end{definition}

The definition \eqref{def:reachable} implies that if $(c_{1}, c_{2}, \ldots, c_{k})$ is a tuple of reachable cycles, then any tuple of weighted cycles with $(c_{1}, c_{2}, \ldots, c_{k})$ is a tuple of weighted reachable cycles and vice versa, if $(c_{1}, c_{2}, \ldots, c_{k})$ is generated but not reachable, then any tuple of weighted cycles will not be weighted reachable. This is true by the lemma \eqref{lemma:induction}.

\begin{definition}
    \label{def:reachable_length}
    Let us denote by $D_{k}$ the set of all tuples of reachable cycles with $k$ cycles.
\end{definition}

\begin{note}
    \label{note:sqcup}
    The set of all tuples of weighted reachable cycles

    can be represented in the form
    \begin{gather*}
        \bigsqcup_{i = 1}^{\beta}\bigsqcup_{d \in D_{i}}\bigsqcup_{(n_{1}, \ldots, n_{i}), n_{j} \in \mathbb{N}}((d_{1}, n_{1}), \ldots, (d_{i}, n_{i})) \\
    \end{gather*}
\end{note}
\begin{proof}
    The proof is fulfilled by the \eqref{lemma:induction} lemma and the definition \eqref{def:reachable_length}.
\end{proof}

\subsection{Formula for \texorpdfstring{$N_{1}(T)$}{}}

We are ready to formulate a statement on the exact formula for $N_{1}(T)$.

\begin{theorem}
\label{theorem:N_1}
For $N_{1}(T)$, that is, the number of time instants $\leq T$, such that points entered the vertex number $1$, the following relation is true:
    \begin{gather*}
        N_{1}(T) = \sum_{i = 1}^{\beta}{\sum_{d \in D_{i}}{\#\left\{\sum_{j = 1}^{i}{n_{j}t(d_{j}) \leq T, \squad n_{j} \in \mathbb{N}}\right\}}}.
    \end{gather*}
\end{theorem}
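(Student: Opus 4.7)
The proof is essentially an assembly of the auxiliary results already established in Sections~2.1--2.3. The plan is to start from Note~\ref{note:time_to_reachable}, which already rewrites $N_{1}(T)$ as $\#\{c \in \mu(\mathcal{T}) : t(c) \leq T\}$, so the task reduces to enumerating the elements of $\mu(\mathcal{T})$ whose time does not exceed $T$.

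Next I would invoke Note~\ref{note:equiv} to identify $\mu(\mathcal{T})$ with the set of all tuples of weighted reachable cycles. From that point Note~\ref{note:sqcup} provides a partition of this set into a disjoint union indexed by the length $i \in \{1, \ldots, \beta\}$, the unweighted reachable tuple $d = (d_{1}, \ldots, d_{i}) \in D_{i}$, and the weight vector $(n_{1}, \ldots, n_{i}) \in \mathbb{N}^{i}$. Each element of the union has the form $c = ((d_{1}, n_{1}), \ldots, (d_{i}, n_{i}))$, and by the definition of the time of a weighted tuple one has $t(c) = \sum_{j = 1}^{i} n_{j}\, t(d_{j})$, so the constraint $t(c) \leq T$ becomes $\sum_{j = 1}^{i} n_{j}\, t(d_{j}) \leq T$.

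Substituting this into the decomposition converts the single count over $\{c \in \mu(\mathcal{T}) : t(c) \leq T\}$ into precisely the triple sum appearing in the theorem. I do not expect any real obstacle here, since all the content has been packed into the earlier lemmas and notes: the bijectivity of $\mu$ as a map $\mathcal{T} \to \mu(\mathcal{T})$, with injectivity furnished by the time-preservation property of Note~\ref{note:time_inv} and surjectivity obtained by restricting the codomain, handles the initial rewriting; the structural description of $\mu(\mathcal{T})$ (which ultimately rests on Lemma~\ref{lemma:induction} and the fact that complete tuples consist of $\beta$ cycles) handles the decomposition. The proof is therefore a short chain of substitutions rather than a genuinely new argument.
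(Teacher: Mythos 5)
Your proposal is correct and follows exactly the same chain as the paper's proof: apply Note~\ref{note:time_to_reachable}, pass to tuples of weighted reachable cycles via Note~\ref{note:equiv}, split the index set using the disjoint union in Note~\ref{note:sqcup}, and unwind $t(c)$ into the inner sum. There is no substantive difference in approach.
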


\begin{proof}
    \begin{align*}
        N_{1}(T) &= [\text{by definition}] \#\{t \in \mathcal{T}: t \leq T\} = \\
                 &= [\text{by note \eqref{note:time_to_reachable}}] \#\{c \in \mu(\mathcal{T}): t(c) \leq T\} \\
                 &= [\text{by note \eqref{note:equiv} and \eqref{note:sqcup}}] \\
                 &\#\left\{c \in \bigsqcup_{i = 1}^{\beta}\bigsqcup_{d \in D_{i}}\bigsqcup_{(n_{1}, \ldots, n_{i}), n_{j} \in \mathbb{N}}((d_{1}, n_{1}), \ldots, (d_{i}, n_{i})): t(c) \leq T\right\} \\
                 &= \sum_{i = 1}^{\beta}{\sum_{d \in D_{i}}{\#\left\{\sum_{j = 1}^{i}{n_{j}t(d_{j}) \leq T, \squad n_{j} \in \mathbb{N}}\right\}}}
    \end{align*}
\end{proof}

\subsection{\texorpdfstring{Tuples of weighted generated cycles of length $\beta$}{}}

\begin{definition}
    Let us define
    \begin{gather*}
        cnt_{[l;r]}^{c}(e) = \sum_{i = l}^{r}{a_{i} \cdot I\{e \in c_{i}\}},
    \end{gather*}
    where $I\{X\}$ is the truth indicator $X$, $e \in E, \squad c = ((c_{1}, a_{1}), \ldots, (c_{p}, a_{p}))$ --- a tuple of weighted reachable cycles.
\end{definition}

\begin{lemma} $ $
    \label{lemma:eq}
        If $c = ((c_{1}, a_{1}), \ldots, (c_{p}, a_{p}))$, $d = ((d_{1}, b_{1}), \ldots, (d_{k}, b_{k}))$ --- two tuples of weighted generated cycles such that $\exists i: \forall j, \squad 1 \leq j < i \squad c_{j} = d_{j} \land a_{j} = b_{j}, \squad c_{i} = d_{i} \land a_{i} \neq b_{i}$, then $t(c) \neq t(d)$.
\end{lemma}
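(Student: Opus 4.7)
The plan is to argue by contradiction. Suppose $t(c) = t(d)$. Because all edge lengths are linearly independent over $\mathbb{Q}$, this forces $\omega(t(c)) = \omega(t(d))$, i.e.\ the multiplicity of every edge is the same on both sides. Using the hypothesis $(c_{j}, a_{j}) = (d_{j}, b_{j})$ for $j < i$, I would cancel the common prefix and reduce the problem to the edge-wise identity
\begin{gather*}
    \sum_{j = i}^{p}{a_{j} c_{j}} = \sum_{j = i}^{k}{b_{j} d_{j}}
\end{gather*}
in $\mathcal{H}$.

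The heart of the proof will then be to produce a single edge $e \in d_{i} = c_{i}$ that does not appear in any later cycle $d_{j}$, $j > i$. To build such an $e$, I would embed $d$ into some complete tuple of generated cycles $D = (D_{1}, \ldots, D_{\beta})$, write $d_{i} = D_{l}$, and consider the outer edge marked right after $D_{l}$ in the complete tuple. Because outer edges precede the inner edge in the fixed ordering at every vertex, step 1 that produced $D_{l}$ must have traversed exactly this edge at the chosen vertex $v_{l}\in D_{l}$, so it lies in $D_{l} = d_{i}$; and because step 1 never traverses marked edges afterwards, it lies in none of $D_{l+1}, \ldots, D_{\beta}$, hence in none of $d_{i+1}, \ldots, d_{k}$. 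The degenerate case when $d_{i}$ is the inner cycle forces $i = k$ (the inner cycle is the last cycle of $D$), so the desired condition is vacuous and any $e \in d_{i}$ will do.

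Assuming WLOG $a_{i} > b_{i}$, evaluating the displayed identity at this $e$ gives $a_{i} + \sum_{j > i}{a_{j} I\{e \in c_{j}\}} = b_{i}$, which is impossible since the left side is at least $a_{i} > b_{i}$. The opposite case $b_{i} > a_{i}$ is symmetric: swap the roles of $c$ and $d$ and apply the same construction to a complete extension of $c$ to produce the distinguishing edge.

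The main obstacle will be carefully justifying the structural claim that the edge marked just after $D_{l}$ both lies in $d_{i}$ and is absent from every later cycle of $D$. This is the real content of the lemma; once it is in hand, the rest is bookkeeping with the linear-independence hypothesis.
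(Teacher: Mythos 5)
Your proof is correct and uses essentially the same mechanism as the paper's: cancel the common prefix via linear independence, then exploit the edge marked immediately after the $i$-th cycle in a complete extension, which by \texttt{step 3} lies in that cycle but, once marked, cannot reappear in any later cycle, so its count on the corresponding side is pinned to the $i$-th weight. The only organizational difference is that the paper derives both inequalities $b_{i} \leq a_{i}$ and $a_{i} \leq b_{i}$ from the marked edges on the $c$-side and the $d$-side respectively (with a preliminary case split on whether the first marks coincide), while you fix WLOG $a_{i} > b_{i}$ and use a single distinguished edge from $d$'s side --- a modest streamlining of the same bookkeeping.
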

\begin{proof} $ $
    Suppose on the contrary, $t(c) = t(d)$.
    Since the lengths of all edges are linearly independent over $\mathbb{Q}$, then for any $e \in E$ must be true that $cnt_{[1;p]}^{c}(e) = cnt_{[1;k]}^{d}(e)$. Let us look at the sequence of marks.

    Let us take $i$ from the condition. We will assume that $i < p \land i < k$ (if $i = p \land i = k$, then it is obvious, otherwise let, for example, $i = p, \squad i < k$, then in the ordered
    set $d$ after $d_{p}$ there will be a mark, and in the new unmarked edge will be different $cnt$ throughout the set).

    Let's see how the cycle $c_{i + 1}$ turned out from $c_{i}$ and $d_{i + 1}$
    from $d_{i}$. Suppose that to get $c_{i + 1}$ marks on $v_{1}, \ldots, v_{s}$, and for $d_{i + 1}$ on $u_{1}, \ldots, u_{t}$ were put.

    Let $v_{1} = u_{1}$, we denote by $e_{v}$ the edge that we marked. We get
    \begin{gather*}
        cnt_{[1;p]}^{c}(e_{v}) = cnt_{[1;i - 1]}^{c}(e_{v}) + a_{i} \\
        cnt_{[1;k]}^{d}(e_{v}) = cnt_{[1;i - 1]}^{d}(e_{v}) + b_{i},
    \end{gather*}
    but $cnt_{[1;i - 1]}^{c}(e_{v}) = cnt_{[1;i - 1]}^{d}(e_{v})$, and $a_{i} \neq b_{i}$ $\Rightarrow$ $cnt_{[1;p]}^{c}(e_{v}) \neq cnt_{[1;k]}^{d}(e_{v})$. Contradiction.

    Now let $v_{1} \neq u_{1}$. The edge marked in $c$ is denoted by $e_{v}$, and in $d$ by $e_{u}$. We get the inequalities

    \begin{equation*}
        \begin{aligned}[c]
        cnt_{[1;p]}^{c}(e_{v}) &= cnt_{[1;i - 1]}^{c}(e_{v}) + a_{i} \\
        cnt_{[1;k]}^{d}(e_{v}) &\geq cnt_{[1;i - 1]}^{d}(e_{v}) + b_{i} \\
        &\Downarrow \\
        b_{i} &\leq a_{i} \\
        \end{aligned}
        \qquad
        \begin{aligned}[c]
        cnt_{[1;k]}^{d}(e_{u}) &= cnt_{[1;i - 1]}^{d}(e_{u}) + b_{i} \\
        cnt_{[1;p]}^{c}(e_{u}) &\geq cnt_{[1;i - 1]}^{c}(e_{u}) + a_{i} \\
        &\Downarrow \\
        a_{i} &\leq b_{i} \\
        \end{aligned}
    \end{equation*}
    We get $a_{i} = b_{i}$, a contradiction.
\end{proof}

\begin{definition}
    We will call two tuples of weighted generated cycles $c = ((c_{1}, a_{1}), \ldots, (c_{m}, a_{m}))$ and $d = ((d_{1}, b_{1}), \ldots, (d_{k}, b_{k}))$ the same and denote as $c = d$, if $m = k$ and $\forall i \leq m$ $c_{i} = d_{i}$ and $a_{i} = b_{i}$.
    Otherwise, we will consider them different and denote them as $c \neq d$.
\end{definition}

\begin{lemma} $ $
        \label{lemma:beta}
        If $c = ((c_{1}, a_{1}), \ldots, (c_{\beta}, a_{\beta}))$ and $d = ((d_{1}, b_{1}), \ldots, (d_{k}, b_{k})), \squad k \leq \beta$ --- two different tuples of weighted generated cycles, then $t(c) \neq t(d)$.
\end{lemma}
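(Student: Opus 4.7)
My plan is to proceed by contradiction. Suppose $t(c) = t(d)$. Since the edge lengths are linearly independent over $\mathbb{Q}$, this is equivalent to $\omega(t(c)) = \omega(t(d))$, i.e.\ $cnt^{c}_{[1;\beta]}(e) = cnt^{d}_{[1;k]}(e)$ for every edge $e \in E$. Let $i$ be the smallest index at which the two weighted tuples disagree, with the convention $i = k+1$ if $d$ is a strict prefix of $c$ as weighted tuples.

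I would then split into three cases. If $i = k+1$ then the first $k$ weighted positions agree, and therefore $\omega(t(c)) - \omega(t(d)) = \sum_{j=k+1}^{\beta} a_{j} c_{j}$, which is a nonzero non-negative integer combination of simple cycles (each $a_{j} \geq 1$ and each $c_{j}$ is nonempty). Hence some edge weight is strictly positive, contradicting $\omega(t(c)) = \omega(t(d))$. If $i \leq k$ with $c_{i} = d_{i}$ and $a_{i} \neq b_{i}$, then Lemma~\eqref{lemma:eq} applies verbatim to $c$ and $d$ and yields $t(c) \neq t(d)$, a contradiction. The remaining case $i \leq k$ with $c_{i} \neq d_{i}$ is handled by imitating the $v_{1} \neq u_{1}$ branch in the proof of Lemma~\eqref{lemma:eq}: because the first $i-1$ weighted cycles coincide and $\omega(t(c)) = \omega(t(d))$, the residual graph $H' = \omega(t(c)) - \sum_{j=1}^{i-1} a_{j} c_{j}$, together with the set of already-marked edges, is literally identical in the two runs. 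Hence the mark placed on the first zero outer edge of $c_{i-1} = d_{i-1}$ is the same edge in both, and the ensuing deterministic step-1 walk from vertex~$1$ produces the same new cycle, forcing $c_{i} = d_{i}$---again a contradiction.

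I expect the main obstacle to be this last subcase: the careful bookkeeping needed to identify the algorithmic states (residual graph plus marked-edge set) of the two runs at the moment cycle $i$ is about to be selected, especially accounting for the fact that $d$ may have silent iterations (weight-$0$ cycles dropped from the tuple) that still place marks. Once one argues that these silent iterations also occur identically in both runs---which follows because the residual graphs agree and the algorithm is deterministic---the match of the algorithmic state at step $i$ is immediate, and the determinism of steps~1--3 instantly forces $c_{i} = d_{i}$; the other two cases are comparatively routine.
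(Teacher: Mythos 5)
Your cases~1 and~2 are correct and follow the paper. Case~3 is where the proposal breaks down. You argue that since the residual graphs and prior marks coincide and ``the algorithm is deterministic,'' the mark placed after $c_{i-1}=d_{i-1}$ and hence the next cycle must coincide, so $c_i\neq d_i$ cannot occur. But the lemma concerns tuples of weighted \emph{generated} cycles, and Definition~\eqref{def:generated} does \emph{not} place marks according to the algorithm's deterministic rule (``the first zero outer edge of the cycle''). Instead, one chooses an \emph{arbitrary} vertex $v_{i-1}$ on $c_{i-1}$ and marks the first unmarked outer edge out of that chosen vertex. Two generated tuples can therefore agree on the first $i-1$ weighted cycles, share the same residual graph, and still diverge at step~$i$ because the free choices of $v_{i-1}$ differ; $c_i\neq d_i$ is entirely consistent and cannot be ruled out by determinism. (You also cannot first upgrade to reachable tuples via Lemma~\eqref{lemma:beta_reachable}: that lemma's proof invokes the very lemma you are proving.)

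What the paper does here is the counting argument you gesture at --- the $v_1\neq u_1$ branch of Lemma~\eqref{lemma:eq} --- but do not carry out. Fix a complete tuple extending $d$ and compare its mark sequence with that of $c$; at the first divergence there is an outer edge $e_u$ that is marked in $d$ but still unmarked in $c$. Since $c$ is a \emph{complete} tuple, every outer edge, in particular $e_u$, is marked in $c$ at some later step $j>i$, and $c_j$ traverses $e_u$ (it is the first unmarked edge out of the chosen vertex $v_j$). Hence $cnt_{[i+1;\beta]}^{c}(e_u)>0$ while $cnt_{[i+1;k]}^{d}(e_u)=0$, and together with agreement on the first $i$ steps this gives $cnt_{[1;\beta]}^{c}(e_u)\neq cnt_{[1;k]}^{d}(e_u)$, hence $t(c)\neq t(d)$ by linear independence. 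The correct conclusion in case~3 is not that $c_i\neq d_i$ is impossible, but that it directly forces $t(c)\neq t(d)$.
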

\begin{proof} $ $

     For the convenience of comparing, we will assume that if there are no marks before the first cycle, then there is an "empty mark".
     Since $d$ is a tuple of weighted reachable cycles, then $d' = (d_{1}, \ldots, d_{k})$ is a tuple of generated cycles, and it is a subsequence of some complete tuple of generated cycles, then $d'$ can be written as $\xrightarrow{U_{0}} d_{1} \xrightarrow{U_{1}}d_{2}\xrightarrow{U_{2}}d_{3}\ldots d_{k - 1}\xrightarrow{U_{k - 1}}d_{k}, \squad U_{i} = (U_{i}^{1},\ldots, U_{i}^{s_{i}}) \lor \varnothing$, and the weighted tuple can be written as\\
     $\xrightarrow{U_{0}} (d_{1}, b_{1}) \xrightarrow{U_{1}}(d_{2}, b_{2})\xrightarrow{U_{2}}(d_{3}, b_{3})\ldots (d_{k - 1}, b_{k - 1})\xrightarrow{U_{k - 1}}(d_{k}, b_{k})$.

    In turn, the following representation is valid for $c$:
    \begin{gather*}
        (c_{1}, a_{1}) \xrightarrow{V_{1}}(c_{2}, a_{2}) \xrightarrow{V_{2}}(c_{3}, a_{3}) \ldots (c_{\beta - 1}, a_{\beta - 1}) \xrightarrow{V_{\beta - 1}} (c_{\beta}, a_{\beta}), \\
        \squad V_{i} = (v_{i}), \squad i \in \{1, \ldots, \beta - 1\}
    \end{gather*}

    Let's consider two cases.

    First case: $U_{0} \neq \varnothing$. Let $U_{0} = (u, \ldots)$ and $e_{u}$ --- the edge that was marked after the first passage of $u$, then $cnt_{[1; k]}^{d}(e_{u}) = 0$, $cnt_{[1; \beta]}^{c}(e_{u}) > 0$. We get $t(c) \neq t(d)$.

    Second case: $U_{0} = \varnothing$. Since $U_{0} = \varnothing$, then $\sigma = (d_{1}, b_{1}) \xrightarrow{U_{1}}(d_{2}, b_{2})\xrightarrow{U_{2}}(d_{3}, b_{3})\ldots (d_{k - 1}, b_{k - 1})\xrightarrow{U_{k - 1}}(d_{k}, b_{k})$.

    Note that views consist of two types of objects: $(\cdot, \cdot)$ and $\xrightarrow{\cdot}$. Let us go from left to right and compare for $(\cdot, \cdot)$ $a_{i}$ and $b_{i}$, and for $\xrightarrow{\cdot}$ we will compare sets $V_{i}$ and $U_{i}$ as multisets.
    If thus one is a proper prefix of another, then obviously $t(c) \neq t(d)$.
    Otherwise, there is $i: a_{i} \neq b_{i} \lor V_{i} \neq U_{i}$.
    In the first case, when $U_{i} = V_{i}$ and $a_{i} \neq b_{i}$ it can be seen that $\forall j < i$ $c_{j} = d_{j} \land a_{j} = b_{j}$, $c_{i} = d_{i} \land a_{i} \neq b_{i}$. By the lemma \eqref{lemma:eq} we get that $t(c) \neq t(d)$.
    In the second case $V_{i} \neq U_{i}$. This means that there is an outer edge $e_{u}$, which was marked in $d$, but has not yet been marked in $c$, but since $c$ is a complete tuple of weighted cycles, then at some point in the future we will switch further from this edge, but since this will happen after $\xrightarrow{V_{i}}$, then $cnt_{[i + 1;\beta]}^{c}(e_{u}) > 0$, but $cnt_{[i + 1; k]}^{d}(e_{u}) = 0$.

    \begin{gather*}
        cnt_{[1; \beta]}^{c}(e_{u}) = cnt_{[1; i]}^{c}({e_{u}}) + \underbrace{cnt_{[i + 1; \beta]}^{c}(e_{u})}_{> 0} \\
        cnt_{[1; k]}^{d}(e_{u}) = cnt_{[1; i]}^{d}(e_{u}) + \underbrace{cnt_{[i + 1; k]}^{d}(e_{u})}_{= 0} \\
        \Downarrow \\
        cnt_{[1; \beta]}^{c}(e_{u}) \neq cnt_{[1; k]}^{d}(e_{u}) \Rightarrow  t(c) \neq t(d).\\
    \end{gather*}
\end{proof}

\begin{lemma}
    \label{lemma:beta_reachable}
    Let $c = (c_{1}, \ldots, c_{\beta})$ be a tuple of generated cycles of length $\beta$,
    then $c$ is also a tuple of reachable cycles of length $\beta$.
\end{lemma}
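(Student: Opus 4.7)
The plan is to verify the three conditions of definition \eqref{def:reachable_weighted} for the weighted tuple $c' := ((c_1, 1), \ldots, (c_\beta, 1))$, using lemma \eqref{lemma:beta} as the main lever, and then invoke definition \eqref{def:reachable}.

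First I observe that a generated tuple of length $\beta$ is automatically a complete generated tuple, because by note \eqref{note:beta_cycles} every complete tuple has length exactly $\beta$ and our tuple is, by definition, a subsequence of some complete tuple of the same length. In particular, $c_\beta$ is the inner cycle; since the inner edges form the single directed cycle $1 \to 2 \to \ldots \to n \to 1$ and nothing else, the only cycle made of inner edges is this Hamiltonian cycle, so $c_\beta$ passes through every vertex.

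Condition 1 of definition \eqref{def:reachable_weighted} is immediate since $(c_1, \ldots, c_\beta)$ is generated. For condition 3 ($t(c') \in \mathcal{T}$), I use the equivalent characterization recalled inside the proof of lemma \eqref{lemma:induction}: it suffices to check that $\omega(t(c')) = \sum_{i=1}^{\beta} c_i$ is an Euler graph containing vertex $1$. In- and out-degrees balance at every vertex because each $c_i$ is a simple directed cycle; strong connectivity of the support and the presence of vertex $1$ in it follow from the observation above that $c_\beta$ alone already covers every vertex in a single directed cycle, and so an Euler circuit through vertex $1$ exists.

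For condition 2 ($\mu(t(c')) = c'$), set $d := \mu(t(c'))$. By lemma \eqref{lemma:sigma}, $d$ is a weighted generated tuple with $t(d) = t(c')$, and by note \eqref{note:beta} its length is at most $\beta$ (any zero-weight cycles are dropped). If $d \neq c'$, then lemma \eqref{lemma:beta}, applied with the length-$\beta$ tuple $c'$ and the length-$\leq\beta$ tuple $d$, forces $t(d) \neq t(c')$, contradicting $t(d) = t(c')$; hence $d = c'$. Combining the three verified conditions with definition \eqref{def:reachable} yields that $(c_1, \ldots, c_\beta)$ is a tuple of reachable cycles. The main obstacle I anticipate is the clean handling of condition 3 --- making precise that the Hamiltonian structure of the inner edges forces enough connectivity in $\sum_{i=1}^{\beta} c_i$ to support an Euler circuit through vertex $1$ --- whereas conditions 1 and 2 follow quickly once lemma \eqref{lemma:beta} is in hand.
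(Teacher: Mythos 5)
Your proof is correct and follows essentially the same route as the paper's: reduce to verifying the three conditions of Definition~\eqref{def:reachable_weighted} for $c'=((c_1,1),\ldots,(c_\beta,1))$, dispatch condition 3 via the observation that $c_\beta$ is the inner (Hamiltonian) cycle so that $\omega(t(c'))$ is Eulerian and touches vertex $1$, and settle condition 2 by setting $d=\mu(t(c'))$, noting $t(d)=t(c')$, and invoking Lemma~\eqref{lemma:beta} to force $d=c'$. You supply somewhat more detail than the paper (explicitly noting that a generated tuple of length $\beta$ is a complete tuple via Note~\eqref{note:beta_cycles}, and explicitly checking condition 1), but the structure and key lemma are identical.
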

\begin{proof}
    By the definition \eqref{def:reachable} it suffices to check that \\
    $c' = ((c_{1}, 1), \ldots, (c_{\beta}, 1))$ is a tuple of weighted reachable cycles. For it, in turn, it is necessary to check that the conditions in the definition \eqref{def:reachable_weighted} are satisfied.
    The third condition holds, since $c_{\beta}$ is an inner cycle that goes through all vertices, respectively, if we go along the edge the number of times that is written on the edge, then we get that it is an Euler graph.
    Let's check the second condition. Consider $d = \mu(t(c'))$, we have $t(c') = t(d)$. But by the lemma \eqref{lemma:beta} we get $c' = d$.
\end{proof}

\section{Calculation of the leading coefficient}

\subsection{Results for an arbitrary directed strongly connected graph \texorpdfstring{$G$}{}}

The following lemma is valid for any oriented strongly connected graph $G$ with edges linearly independent over $\mathbb{Q}$.

Let us denote by $N_{\tau, e, r}(T)$ the number of points at time moment $T$ on a segment of length $\tau$, which is at a distance $r$ from the beginning of the edge $e$.

\begin{lemma} $ $
\label{pg}
Suppose we know that the number of possible moving points that entered the first vertex by the time $T$ can be represented as: $N_{1}(T) = a_{1}T^{\beta} + b_{1}T^{\beta - 1} + O(T^{\beta - 2})$ and let $r$ be the distance to the vertex $1$ from the beginning of the segment, then:
\begin{enumerate}
    \item $N_{\tau, e, r}(T) = \tau \cdot a_{1} \cdot \beta \cdot T^{\beta - 1} + O(T^{\beta - 2}) \squad \forall e = (1, v) \in E$, $\tau > 0$
    \item $N_{\tau, e, r}(T) = \tau \cdot a_{1} \cdot \beta \cdot T^{\beta - 1} + O(T^{\beta - 2}) \squad \forall e \in E$, $\tau > 0$
    \item $N(T) = \left(\sum_{e \in E}{t(e)}\right) \cdot a_{1} \cdot \beta \cdot T^{\beta - 1} + O(T^{\beta - 2}).$
\end{enumerate}
\end{lemma}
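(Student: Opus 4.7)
The plan is to reduce all three statements to a first difference of vertex-entry counts. For any edge $e = (u, v)$, a point is currently on the segment at distance $r$ to $r + \tau$ from the start of $e$ at time $T$ if and only if that point was launched from $u$ at a moment in $(T - r - \tau, T - r]$. Because edge lengths are linearly independent over $\mathbb{Q}$, distinct edge-sequences from the starting vertex produce distinct arrival times at $u$, and each arrival at $u$ puts exactly one point onto $e$. Hence
\[
N_{\tau, e, r}(T) = N_u(T - r) - N_u(T - r - \tau).
\]
Statement (1), with $u = 1$, then follows immediately by substituting the hypothesized expansion of $N_1$ and expanding: $(T - r)^\beta - (T - r - \tau)^\beta = \beta \tau T^{\beta - 1} + O(T^{\beta - 2})$, while the $b_1 T^{\beta - 1}$ difference and the remainder together contribute only $O(T^{\beta - 2})$.

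For statement (2) I need the analogous expansion $N_u(T) = a_1 T^\beta + b_u T^{\beta - 1} + O(T^{\beta - 2})$ for some constant $b_u$ (a priori depending on $u$). Using strong connectivity, fix edge-sequences $P \colon 1 \to u$ of length $\ell_P$ and $R \colon u \to 1$ of length $\ell_R$. Concatenating $P$ with walks $1 \to 1$ and $R$ with walks $1 \to u$ yields injections of walk sets, hence the sandwich
\[
N_1(T - \ell_P) \leq N_u(T) \leq N_1(T + \ell_R),
\]
which pins down the leading coefficient of $N_u$ as exactly $a_1$. To obtain the second-order term I would adapt the cycle-splitting framework of Theorem \ref{theorem:N_1} to walks ending at $u$: such walks still decompose into weighted tuples of reachable simple cycles of length at most $\beta$, and by the same Ehrhart-type lattice-point count, length-$\beta$ tuples contribute a $T^\beta / (\beta! \prod t(c_j))$ leading term with $O(T^{\beta - 1})$ correction, while tuples of shorter length contribute only $O(T^{\beta - 1})$. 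Together with the sandwich, this gives the required two-term expansion for $N_u(T)$, and then the first-difference formula from the first paragraph produces (2). The main obstacle is precisely this passage from a leading-order sandwich to a second-order expansion: the inequality bounds alone leave an unknown $T^{\beta - 1}$ term, and it is the structural cycle-decomposition (rather than any soft argument) that is needed to confirm that this $T^{\beta - 1}$ term has a well-defined constant coefficient with controlled next-order remainder.

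Statement (3) is a short accounting. Away from the discrete (hence measure-zero) set of vertex-entry instants, every moving point lies in the open interior of exactly one edge of $G$, so
\[
N(T) = \sum_{e \in E} N_{t(e), e, 0}(T).
\]
Applying (2) to each summand with $\tau = t(e)$ and $r = 0$ and collecting the $T^{\beta - 1}$ coefficients yields
\[
N(T) = \left(\sum_{e \in E} t(e)\right) a_1 \beta T^{\beta - 1} + O(T^{\beta - 2}),
\]
as required.
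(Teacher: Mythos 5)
Your items (1) and (3) match the paper's proof. The gap is in item (2), and it is exactly the gap you flag yourself. Your sandwich $N_1(T - \ell_P) \leq N_u(T) \leq N_1(T + \ell_R)$ loses a full order: the two bounds differ by $a_1\beta(\ell_P + \ell_R)T^{\beta - 1} + O(T^{\beta - 2})$, so only the leading coefficient of $N_u$ is pinned down, and your proposed fix --- rerunning the cycle-decomposition machinery for walks terminating at $u$ --- is not carried out and is genuinely nontrivial (walks to $u$ are not closed, so they do not decompose into tuples of cycles; one would need a path-plus-cycles decomposition and a corresponding Ehrhart-type count, none of which is established in the paper).

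The paper sidesteps this entirely by sandwiching the \emph{segment} counts rather than the vertex-entry counts. Having established item (1), it knows that for a short segment of length $\varepsilon$ on an edge $e_1$ emanating from vertex $1$, the count $N_{\varepsilon, e_1, r_1}(T)$ already has a full two-term expansion $\varepsilon a_1\beta T^{\beta-1} + O(T^{\beta-2})$, because it is a first difference of $N_1$. It then bounds $N_{\varepsilon, e, r_2}(T)$ between $N_{\varepsilon, e_1, r_1}(T - p_2)$ and $N_{\varepsilon, e_1, r_1}(T + p_1)$, where $p_1, p_2$ are fixed path lengths between the two segments. Because these quantities are \emph{already} first differences, shifting $T$ by a bounded constant changes them only by $O(T^{\beta-2})$, so the sandwich is tight at the $T^{\beta-1}$ level, and general $\tau$ follows by chopping the segment into short pieces. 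In short: take the first difference \emph{before} sandwiching, not after; then the sandwich loses nothing. You should replace the $N_u$ sandwich and the appeal to an unproved cycle decomposition with this argument.
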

\begin{proof} $ $
\begin{enumerate}
    \item
    If $r$ is the distance to the vertex $1$ from the beginning of the segment and $T' = T - r$, then
      \begin{gather*}
        N_{\tau, e, r}(T) = N_{1}(T') - N_{1}(T' - \tau) + O(1) =\\
        a_{1}T'^{\beta} + b_{1}T'^{\beta - 1} - (a_{1}(T' - \tau)^{\beta} + b_{1}(T' - \tau)^{\beta - 1}) + O(T'^{\beta - 2}) = \\
        = \tau a_{1} \beta T'^{\beta - 1} + O(T'^{\beta - 2}) =\tau a_{1} \beta T^{\beta - 1} + O(T^{\beta - 2}) \\
    \end{gather*}
    \item
        Let us prove first for edges of length $\tau = \varepsilon < t(e), \forall e \in E$.
        Let us put a segment on an edge $e$ of length $\varepsilon$ and a segment of length $\varepsilon$ on an edge $e_{1}$,
        which starts at $1$ and from which we can reach $e$. Let from the end of the segment on the edge $e$ to the end of the segment on the edge $e_{1}$ be a path of length $p_{1}$, and from the end of the segment on the edge $e_{1}$ to the end of the segment on the edge $e$ be the path of length $p_{2}$.
        \begin{gather*}
            N_{\varepsilon, e_{1}, r_{1}}(T - p_{2}) \leq N_{\varepsilon, e, r_{2}}(T) \leq N_{\varepsilon, e_{1}, r_{1}}(T + p_{1}) \\
            \varepsilon a_{1} \beta T^{\beta - 1} + O(T^{\beta - 2}) \leq N_{\varepsilon, e, r_{2}}(T) \leq \varepsilon a_{1} \beta T^{\beta - 1} + O(T^{\beta - 2}) \\
            \Downarrow \\
            N_{\varepsilon, e, r_{2}}(T) = \varepsilon a_{1} \beta T^{\beta - 1} + O(T^{\beta - 2})
        \end{gather*}
        Let us consider the rest of the values of $\tau$.
        We cover a segment of length $\tau$ with segments of length $\leq \min\{t(e), e \in E\}$, that is $ \{l_{i}\}_{i = 1}^{k}, \squad \sum_{i = 1}^{k}{l_{i}} = \tau, \squad l_{i}\leq \min\{t(e), e \in E\}$.
        We get:
        \begin{gather*}
            N_{\tau, e, r}(T) = \sum_{i = 1}^{k}{\left(N_{l_{i}, e, r_{i}}(T) + O(1)\right)} =\sum_{i = 1}^{k}{(l_{i} a_{1} \beta T^{\beta - 1} + O(T^{\beta - 2}))} = \\
            \left(\sum_{i = 1}^{k}{l_{i}}\right) a_{1} \beta T^{\beta - 1} + O(T^{\beta - 2}) = \tau a_{1} \beta T^{\beta - 1} + O(T^{\beta - 2}).
        \end{gather*}

    \item
        $N(T)$ is the number of points on the graph at time moment $T$, this number can be obtained as the sum of the number of points on each edge at time moment $T$. We have:
        \begin{gather*}
            N(T) = \sum_{e \in E}{N_{t(e), e, 0}(T)} = \left(\sum_{e \in E}{t(e)}\right) a_{1}  \beta  T^{\beta - 1} + O(T^{\beta - 2}) \\
        \end{gather*}
\end{enumerate}
\end{proof}

The statement just obtained implies the statement about the asymptotically equal distribution of the moving points.

\begin{corollary}[On the uniformity of distribution]
Let us know that the number of possible moving points entering the first vertex by the time $T$ can be represented as: $N_{1}(T) = a_{1}T^{\beta} + b_{1}T^{\beta - 1} + O(T^{\beta - 2})$, then the fraction of the number of moving points on an arbitrary segment of length $\tau$ will asymptotically tend to the fraction of the segment in the total length of our metric graph, namely:
    $\frac{N_{\tau, e}(T)}{N(T)} \xrightarrow{T \to +\infty} \frac{\tau}{\sum_{e \in E}{t(e)}}$
\end{corollary}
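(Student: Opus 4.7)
The plan is to derive this corollary as an almost immediate consequence of parts (2) and (3) of Lemma \ref{pg}, since both quantities $N_{\tau,e}(T)$ and $N(T)$ have already been computed there up to an $O(T^{\beta-2})$ error under the same hypothesis on $N_1(T)$.

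First I would record the two asymptotic formulas we need. By part (2) of Lemma \ref{pg}, for any edge $e \in E$, any $\tau > 0$ and any offset $r$,
\begin{gather*}
    N_{\tau, e, r}(T) = \tau \cdot a_{1} \cdot \beta \cdot T^{\beta - 1} + O(T^{\beta - 2}).
\end{gather*}
By part (3) of the same lemma,
\begin{gather*}
    N(T) = \Bigl(\sum_{e \in E}{t(e)}\Bigr) \cdot a_{1} \cdot \beta \cdot T^{\beta - 1} + O(T^{\beta - 2}).
\end{gather*}
Taking the ratio and dividing both numerator and denominator by $a_{1}\beta T^{\beta-1}$ gives
\begin{gather*}
    \frac{N_{\tau, e}(T)}{N(T)} = \frac{\tau + O(T^{-1})}{\sum_{e \in E}{t(e)} + O(T^{-1})} \xrightarrow{T \to +\infty} \frac{\tau}{\sum_{e \in E}{t(e)}},
\end{gather*}
which is the required limit.

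The only subtlety to double-check is that the leading coefficients do not vanish, i.e.\ that $a_{1}\beta T^{\beta-1}$ really is the dominant term of both expressions. This is ensured by the standing assumption $\beta \geq 2$ (so $\beta \neq 0$), by $\sum_{e \in E} t(e) > 0$, and by the fact that $a_{1} > 0$, which follows from $N_{1}(T) \to \infty$ as $T \to \infty$ for any graph with $\beta \geq 2$. So no step is really hard; the corollary is essentially a formal division of two asymptotic expansions already provided by Lemma \ref{pg}.
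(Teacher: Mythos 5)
Your proof is correct and follows essentially the same route as the paper: both invoke parts (2) and (3) of Lemma \ref{pg} and divide the two asymptotic expansions. Your additional remark that $a_{1}\beta \neq 0$ is a reasonable sanity check but is not part of the paper's argument.
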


\begin{proof} $ $

    Let us use items $2$ and $3$ of the lemma \eqref{pg}.
    \begin{align*}
        \lim_{T \to +\infty}{\frac{N_{\tau, e}(T)}{N(T)}} = \lim_{T \to +\infty}{\frac{\tau a_{1} \beta + O\left(\frac{1}{T}\right)}{\left(\sum_{e \in E}{t(e)}\right) a_{1} \beta + O(\frac{1}{T})}} =\\
        &= \frac{\tau}{\sum_{e \in E}{t(e)}}. \\
    \end{align*}
\end{proof}

\subsection{Hamiltonian graphs}

Now let's return to Hamiltonian graphs.
\begin{lemma}
    \label{lemma:linear_independent}
    If $c = (c_{1}, \ldots, c_{k})$ is a tuple of generated cycles, and the graph is Hamiltonian, then $\{t(c_{1}), \ldots, t(c_{k})\}$ are linearly independent over $\mathbb{Q}$
\end{lemma}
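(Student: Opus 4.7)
The plan is to reduce the statement about $\mathbb{Q}$-linear independence of the cycle times $t(c_{1}), \ldots, t(c_{k})$ to a statement about linear independence of the cycles themselves, viewed as indicator vectors in $\mathbb{Q}^{|E|}$. This reduction is immediate: if $\sum_{i=1}^{k}\alpha_{i}t(c_{i}) = 0$ with rational $\alpha_{i}$, then expanding $t(c_{i}) = \sum_{e \in c_{i}}t(e)$ and regrouping by edge gives $\sum_{e \in E}\bigl(\sum_{i:\,e \in c_{i}}\alpha_{i}\bigr)t(e) = 0$, and the hypothesis that $\{t(e)\}_{e \in E}$ is linearly independent over $\mathbb{Q}$ forces $\sum_{i:\,e \in c_{i}}\alpha_{i} = 0$ for every $e$, which is exactly the vanishing of $\sum_{i}\alpha_{i}c_{i}$ as a vector. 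So the real task is to show that $c_{1}, \ldots, c_{k}$ are linearly independent as vectors.

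For this I would extend $(c_{1}, \ldots, c_{k})$ to a complete tuple of generated cycles $(c'_{1}, \ldots, c'_{\beta})$ with $c_{i} = c'_{\pi(i)}$ for some increasing $\pi$, and associate to each cycle of the subsequence a distinguishing edge, namely the outer edge $e'_{j}$ marked at the transition $c'_{j} \to c'_{j+1}$ for $j < \beta$. By Step~3 of the algorithm this edge lies in $c'_{j}$. The crucial claim, and the step I expect to require the most care, is that $e'_{j}$ does not appear in any later cycle $c'_{i}$ with $i > j$: once $e'_{j}$ has been marked, Step~1 will never pick $e'_{j}$ at its source vertex, so a later simple cycle either avoids that vertex entirely or exits it along a different edge. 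Setting $e_{i} := e'_{\pi(i)}$ whenever $\pi(i) < \beta$ therefore yields a distinguishing edge with $e_{i} \in c_{i}$ and $e_{i} \notin c_{l}$ for all $l > i$.

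Once these distinguishing edges are in hand, I would conclude by triangular elimination. Given $\sum_{i}\alpha_{i}c_{i} = 0$, reading off the coefficient of $e_{1}$ yields $\alpha_{1} = 0$, then the coefficient of $e_{2}$ yields $\alpha_{2} = 0$, and so on. The only edge case is when $\pi(k) = \beta$, so that $c_{k}$ is the inner Hamiltonian cycle and carries no subsequent mark; but after eliminating $\alpha_{1}, \ldots, \alpha_{k-1}$ the relation collapses to $\alpha_{k}c_{k} = 0$, and since this cycle is non-empty, $\alpha_{k} = 0$ as well. The Hamiltonicity of the graph enters mainly through the algorithm's setup (the Hamiltonian cycle makes the ``inner last'' ordering well defined) and through the fact that the terminal inner cycle in the complete tuple is a genuine non-empty cycle, so the final edge case above really does close the argument.
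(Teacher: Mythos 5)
Your proof is correct, but it takes a genuinely different route from the paper's. The paper contradicts Lemma \eqref{lemma:beta}: starting from a supposed relation $\sum_i \alpha_i t(c_i) = 0$ with integer $\alpha_i$, it pads $\alpha$ to a vector $\alpha'$ of length $\beta$ indexed by the enclosing complete tuple $d$, picks $\gamma > \max_i|\alpha_i|$, and observes that the two weighted tuples $((d_j, \alpha'_j + \gamma))_j$ and $((d_j, \gamma))_j$ are different yet have equal time, which is impossible by Lemma \eqref{lemma:beta}. You instead descend to the indicator vectors of the cycles in $\mathbb{Q}^{|E|}$ and exhibit, for each non-terminal cycle $c'_j$ of the complete tuple, the outer edge $e'_j$ marked at the transition $c'_j \to c'_{j+1}$: it lies in $c'_j$ (it is the first unmarked edge at its source vertex, hence the edge of $c'_j$ there), and once marked Step~1 never reselects it, so $e'_j \notin c'_i$ for $i > j$. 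The triangular elimination, together with the non-emptiness of the terminal inner cycle, then finishes. Your route is more elementary, avoiding the relatively heavy Lemma \eqref{lemma:beta}, and it even gives the slightly stronger conclusion that the cycles themselves are linearly independent as vectors in $\mathbb{Q}^{|E|}$; the paper's proof is shorter given that Lemma \eqref{lemma:beta} is already available, and in fact that lemma internally encodes essentially the same triangular structure of marks, just phrased in terms of weighted tuples rather than indicator vectors.
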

\begin{proof} $ $

  Suppose on the contrary $\exists (\alpha_{1}, \ldots, \alpha_{k}) \neq (0, \ldots, 0)$, such that $\sum_{i = 1}^{k}{\alpha_{i}t(c_{i})} = 0$. We can assume that $\alpha_{i} \in \mathbb{Z}$. It is known that $c$ is a subsequence of a complete reachable tuple $d$ (a set of length $\beta$). Let $\gamma = \max_{i = 1}^{k}{|\alpha_{i}| + 1}$, we get
    \begin{align*}
        \sum_{i = 1}^{k}{\alpha_{i}t(c_{i})} &= 0 \\
        \sum_{i = 1}^{\beta}{(\alpha_{i}' + \gamma)t(d_{i})} &= \sum_{i = 1}^{\beta}{\gamma t(d_{i})}, \squad (\alpha_{i}' + \gamma) > 0 \\
    \end{align*}
    $\alpha'$ is obtained from $\alpha$ by extending to length $\beta$ with zeros on those positions where there is no element from $c$ in $d$.

    $(\alpha_{1}, \ldots, \alpha_{k}) \neq (0, \ldots, 0) \Rightarrow (\alpha_{1}' + \gamma, \ldots, \alpha_{\beta}' + \gamma) \neq (\gamma, \ldots, \gamma)$.
    This contradicts the lemma \eqref{lemma:beta}.
\end{proof}

We are ready to formulate the main result.

\begin{theorem}[On the asymptotics of the number of end positions of a random walk on a Hamiltonian digraph]
    \label{lemma:main}
Suppose we are given a Hamiltonian metric directed graph. Then the number of possible moving points that entered the first vertex by the time $T$ can be represented as:
$$N_{1}(T) = a_{1}T^{\beta} + b_{1}T^{\beta - 1} + O(T^{\beta - 2})$$, where $a_{1} = \sum_{d \in D_{\beta}}\frac{1}{\beta!\prod_{i = 1}^{\beta}{t(d_{i})}},$
and for the total number of moving points on the graph, the following asymptotic formula is valid:
        $$N(T) = \left(\sum_{d \in D_{\beta}}\frac{\sum_{e \in E}{t(e)}}{(\beta - 1)!\prod_{i = 1}^{\beta}{t(d_{i})}}\right) \cdot T^{\beta - 1} + O(T^{\beta - 2})$$
\end{theorem}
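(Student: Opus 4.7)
The plan is to combine Theorem~\eqref{theorem:N_1} with a lattice-point asymptotic for a simplex whose defining side-lengths are $\mathbb{Q}$-linearly independent, and then feed the resulting representation of $N_1(T)$ into Lemma~\eqref{pg}(3).

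\textbf{Step 1 (decompose $N_1$).} Split the sum in Theorem~\eqref{theorem:N_1} according to the tuple length $i \in \{1, \ldots, \beta\}$. By Lemma~\eqref{lemma:beta_reachable}, the layer $i = \beta$ is exactly the set of complete tuples of generated cycles, and for every $d \in D_i$ with $i \leq \beta$ Lemma~\eqref{lemma:linear_independent} guarantees that $t(d_1), \ldots, t(d_i)$ are $\mathbb{Q}$-linearly independent; in particular each $t(d_j) > 0$.

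\textbf{Step 2 (top layer, $i = \beta$).} For each $d = (d_1, \ldots, d_\beta) \in D_\beta$, the substitution $n_j = m_j + 1$ turns the positive-integer count into the number of non-negative integer points in the translated simplex $\sum_j m_j t(d_j) \leq T - \sum_j t(d_j)$. A standard lattice-point asymptotic for a real-coefficient simplex (of the Barnes--Bernoulli type advertised in the keywords) gives a two-term expansion
$$\#\left\{(n_1, \ldots, n_\beta) \in \mathbb{N}^\beta : \sum_{j=1}^\beta n_j t(d_j) \leq T\right\} = \frac{T^\beta}{\beta! \prod_{j=1}^\beta t(d_j)} + c_d\, T^{\beta-1} + O(T^{\beta-2}),$$
with some constant $c_d$. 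The $\mathbb{Q}$-linear independence of the $t(d_j)$ is what rules out an anomalous concentration of lattice points on the boundary hyperplanes and keeps the remainder at $O(T^{\beta-2})$ rather than merely $O(T^{\beta-1})$.

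\textbf{Step 3 (lower layers, $i < \beta$).} For any $d \in D_i$ with $i < \beta$, the corresponding count is bounded by the number of positive integer points in an $i$-dimensional simplex and hence is $O(T^i) \leq O(T^{\beta-1})$. Since each $D_i$ is finite, summing over $i < \beta$ contributes an overall $O(T^{\beta-1})$, which can be absorbed into $b_1 T^{\beta-1}$.

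\textbf{Step 4 (assemble and apply Lemma~\eqref{pg}).} Summing Steps 2 and 3 over all $d$ yields the required form
$$N_1(T) = a_1 T^\beta + b_1 T^{\beta-1} + O(T^{\beta-2}), \qquad a_1 = \sum_{d \in D_\beta} \frac{1}{\beta! \prod_{j=1}^\beta t(d_j)},$$
with $b_1$ left implicit (the theorem does not ask for its value). Because a Hamiltonian digraph is strongly connected and edge lengths are $\mathbb{Q}$-linearly independent, Lemma~\eqref{pg}(3) applies and gives
$$N(T) = \Bigl(\sum_{e \in E} t(e)\Bigr) a_1 \beta\, T^{\beta-1} + O(T^{\beta-2}) = \left(\sum_{d \in D_\beta} \frac{\sum_{e \in E} t(e)}{(\beta-1)! \prod_{j=1}^\beta t(d_j)}\right) T^{\beta-1} + O(T^{\beta-2}).$$

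The main obstacle is Step 2: the leading (volume) term is a standard Riemann-sum estimate, but squeezing the remainder down to $O(T^{\beta-2})$ so that the $T^{\beta-1}$ coefficient is well-defined requires a genuine Barnes--Bernoulli or Fourier-analytic treatment of the simplex boundary that uses the $\mathbb{Q}$-linear independence of the $t(d_j)$ in an essential way, rather than only their positivity.
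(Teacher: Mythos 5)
Your proposal follows the same route as the paper: split the sum in Theorem~\eqref{theorem:N_1} by tuple length, apply the Barnes--Bernoulli lattice-point expansion (justified via Lemma~\eqref{lemma:linear_independent}) to the top layer, and feed the resulting two-term form of $N_1$ into Lemma~\eqref{pg}. Steps 1, 2 and 4 are sound and match the paper's argument.

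However, Step 3 contains a genuine gap. You bound each layer $i<\beta$ crudely by $O(T^i)\le O(T^{\beta-1})$ and then assert that the resulting $O(T^{\beta-1})$ total ``can be absorbed into $b_1 T^{\beta-1}$''. This is not legitimate: an arbitrary quantity that is merely $O(T^{\beta-1})$ need not equal $cT^{\beta-1}+O(T^{\beta-2})$ for any constant $c$, so it cannot be folded into a polynomial term. The distinction matters here, because Lemma~\eqref{pg} requires the stronger form $N_1(T)=a_1T^\beta+b_1T^{\beta-1}+O(T^{\beta-2})$ --- with $b_1$ a genuine constant --- in order to produce a remainder $O(T^{\beta-2})$ after differencing (in item~1 of that lemma the two $b_1$-terms cancel and the proof relies on this). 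If you only know $N_1(T)=a_1T^\beta+O(T^{\beta-1})$, the difference $N_1(T')-N_1(T'-\tau)$ is only controlled up to $O(T^{\beta-1})$, which kills the conclusion. The fix is easy and is what the paper does implicitly: apply the same Barnes--Bernoulli expansion (again using Lemma~\eqref{lemma:linear_independent}) to the layer $i=\beta-1$, obtaining a well-defined coefficient of $T^{\beta-1}$ plus an $O(T^{\beta-2})$ remainder for that layer; the layers $i\le\beta-2$ then contribute $O(T^{\beta-2})$ and go directly into the error. With that correction the rest of your argument goes through unchanged.
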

\begin{proof} $ $
        1. By the theorem \eqref{theorem:N_1} we have:
        \begin{align*}
            N_{1}(T) &= \sum_{i = 1}^{\beta}{\sum_{d \in D_{i}}{\#\left\{\sum_{j = 1}^{i}{n_{j}t(d_{j}) \leq T, \squad n_{j} \in \mathbb{N}}\right\}}} \\
                     &= {\sum_{d \in D_{\beta}}{\#\left\{\sum_{j = 1}^{\beta}{n_{j}t(d_{j}) \leq T, \squad n_{j} \in \mathbb{N}}\right\}}} + \sum_{i = 1}^{\beta - 1}{\sum_{d \in D_{i}}{\#\left\{\sum_{j = 1}^{i}{n_{j}t(d_{j}) \leq T, \squad n_{j} \in \mathbb{N}}\right\}}}
                             \end{align*}
            Here we apply the result of D.S. Spencer on the Barnes-Bernoulli polynomials (see \cite{RCD}) and lemma \eqref{lemma:linear_independent}. It should be noted that the required representation exists for arbitrary times of passage of edges linearly independent over $\mathbb{Q}$.\\

        \begin{align*}      &= \sum_{d \in D_{\beta}}{\frac{1}{\beta!\prod_{j = 1}^{\beta}{t(d_{j})}}} T^{\beta} + b'T^{\beta - 1} + O(T^{\beta - 2}) + b''T^{\beta - 1} + O(T^{\beta - 2}) \\
                     &= a_{1}T^{\beta} + b_{1}T^{\beta - 1} + O(T^{\beta - 2}) \\
                      \end{align*}
                      This is the representation that we assumed to exist in lemma \eqref{pg}.

        2. We apply the lemma \eqref{pg} and obtain that
        \begin{align*}
            N(T) &= \left(\sum_{e \in E}{t(e)}\right) a_{1}  \beta  T^{\beta - 1} + O(T^{\beta - 2}) =\\
             &=\left(\sum_{e \in E}{t(e)}\right) \left(\sum_{d \in D_{\beta}}\frac{1}{\beta!\prod_{i = 1}^{\beta}{t(d_{i})}}\right)  \beta  T^{\beta - 1} + O(T^{\beta - 2})= \\
            &= \left(\sum_{d \in D_{\beta}}\frac{\sum_{e \in E}{t(e)}}{(\beta - 1)!\prod_{i = 1}^{\beta}{t(d_{i})}}\right) \cdot T^{\beta - 1} + O(T^{\beta - 2})
        \end{align*}

    By the lemma \eqref{lemma:beta_reachable} one can understand $D_{\beta}$ as the set of all generated tuples of cycles of length $\beta$.
\end{proof}

\section{The discussion of the results}

\subsection{The case when there is exactly one complete reachable tuple}

In the general case, $D_{\beta}$ for a Hamiltonian graph can have many tuples of reachable cycles.
Let's consider a special case of Hamiltonian graphs for which there is exactly one complete reachable tuple.

Consider an oriented cycle. We split all the vertices into two sets. The first set will contain vertices with consecutive numbers, but will not contain the starting vertex. The second set will contain all the other vertices (the starting vertex in particular).

All edges that are outside the oriented cycle can only go from the second set of vertices to the first.


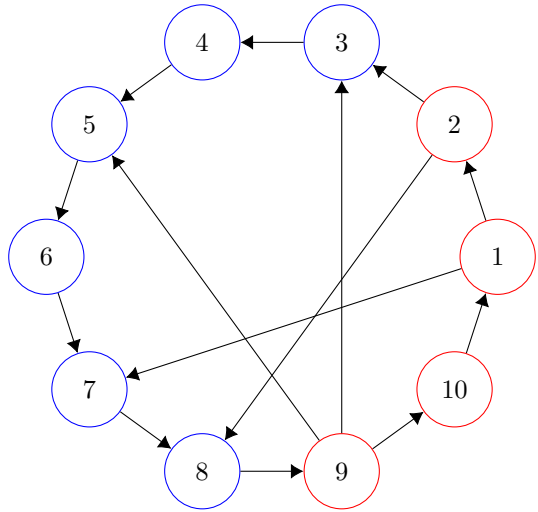
\begin{figure}[!htb]
\begin{center}
\tikzstyle{circ node}=[circle,draw=blue]
\tikzstyle{arrow style}=[->,-{Latex[width=2mm]}]
\begin{tikzpicture}[minimum size=1cm, scale=3]
    \node[circ node][draw=red] (1) at (0 * 36:1cm) {$1$};
    \node[circ node][draw=red] (2) at (1 * 36:1cm) {$2$};
    \node[circ node] (3) at (2 * 36:1cm) {$3$};
    \node[circ node] (4) at (3 * 36:1cm) {$4$};
    \node[circ node] (5) at (4 * 36:1cm) {$5$};
    \node[circ node] (6) at (5 * 36:1cm) {$6$};
    \node[circ node] (7) at (6 * 36:1cm) {$7$};
    \node[circ node] (8) at (7 * 36:1cm) {$8$};
    \node[circ node][draw=red] (9) at (8 * 36:1cm) {$9$};
    \node[circ node][draw=red] (10) at (9 * 36:1cm) {$10$};

    \draw[arrow style] (1) -- (2);
    \draw[arrow style] (2) -- (3);
    \draw[arrow style] (3) -- (4);
    \draw[arrow style] (4) -- (5);
    \draw[arrow style] (5) -- (6);
    \draw[arrow style] (6) -- (7);
    \draw[arrow style] (7) -- (8);
    \draw[arrow style] (8) -- (9);
    \draw[arrow style] (9) -- (10);
    \draw[arrow style] (10) -- (1);

    \draw[arrow style] (2) -- (8);
    \draw[arrow style] (1) -- (7);
    \draw[arrow style] (9) -- (3);
    \draw[arrow style] (9) -- (5);
\end{tikzpicture}
\caption{An example of a graph that has exactly one tuple of reachable cycles}
\end{center}
\end{figure}
{\color{blue} Blue} vertices are from the first set, {\color{red} red} are from the second.

We have exactly one complete tuple of reachable cycles $d^{0}$, because in every simple cycle the graph contains at most one outer edge.
Final formula:

\begin{gather*}
   N(T) =  \left(\frac{\sum_{e \in E}{t(e)}}{(\beta - 1)!\prod_{i = 1}^{\beta}{t(d^{0}_{i})}}\right) \cdot T^{\beta - 1} + O(T^{\beta - 2})
\end{gather*}

\subsection{Comparison with previous results} $ $

Let us now consider a class of directed graphs,
for which the leading asymptotic coefficient was obtained earlier and compare the results. These are one-way Sperner graphs, which are described in detail in \cite{2021}. The main property of such graphs was that for any vertex we had a unique path to it from the starting vertex, and the edges defining the cycles led only to the starting vertex.

By the lemma $\eqref{pg}$ it suffices to find $N_{1}(T)$. Recall that for the first Betty number, the following formula is valid: $\beta = |E| - |V| + 1$. We get that the graph contains only $|E| - (|V| - 1) = \beta$ of the "backward" edges to the start vertex.

Any time of entry to the starting vertex is given by a linear combination with nonnegative coefficients $a_{1}, \ldots, a_{\beta}$ of simple cycles $c_{1}, \ldots, c_{\beta}$, each of which contains a backward edge.
Note that if $(a_{1}, \ldots, a_{\beta}) \neq (b_{1}, \ldots, b_{\beta})$, $a_{i}, b_{i} \in \mathbb{N} \cup \{0\}$ and $a \neq 0$ or $b \neq 0$, then $\sum_{i = 1}^{\beta}a_{i}t(c_{i}) \neq \sum_{i = 1}^{\beta}b_{i}t(c_{i})$, this is true, because in each cycle $c_{i}$ there is some backward edge $e_{i}$, which is not in other cycles $c_{j}, \squad j \neq i$, and all edges are linearly independent over $\mathbb{Q}$.

We have

\begin{gather*}
    N_{1}(T) = \#\left\{ \sum_{i = 1}^{\beta}{n_{i}c_{i}} \leq T, \squad n_{i} \in \mathbb{N} \cup \{0\}\right\} = \frac{1}{\beta!\prod_{i = 1}^{\beta}{t(c_{i})}}T^{\beta} + b_{1}T^{\beta - 1} + O(T^{\beta - 2})
\end{gather*}
and the first coefficient $a_{1} = \frac{1}{\beta!\prod_{i = 1}^{\beta}{t(c_{i})}}$.

From the lemma $\eqref{pg}$ we obtain:

\begin{gather*}
    N(T) = \left(\sum_{e \in E}{t(e)}\right)a_{1}\beta T^{\beta - 1} + O(T^{\beta - 2}) = \frac{\sum_{e \in E}{t(e)}}{(\beta - 1)!\prod_{i = 1}^{\beta}{t(c_{i})}}T^{\beta - 1} + O(T^{\beta - 2})
\end{gather*}
{}
The resulting asymptotic formula for $N(T)$ coincides with the one given in \cite{2021}.

\subsection{Computer experiments}

In theorem \eqref{lemma:main} we derived the power and the leading term of the asymptotic formula for $N(T)$. To be sure that everything is correct, let's check the formula on some example. Let's take the graph from Fig. \ref{fig:graph} and add times to the edges from Graph Fig. \ref{fig:graph} in the following way:


\begin{figure}[!htb]
\begin{center}
\tikzstyle{circ node}=[circle,draw=black]
\tikzstyle{arrow style}=[->,-{Latex[width=2mm]}]
\begin{tikzpicture}[minimum size=1cm, scale=3]
    \node[circ node,draw=green] (1) at (0 * 90:1cm) {$1$};
    \node[circ node] (2) at (1 * 90:1cm) {$2$};
    \node[circ node] (3) at (2 * 90:1cm) {$3$};
    \node[circ node] (4) at (3 * 90:1cm) {$4$};

    \draw[arrow style] (1) edge node[left] {$\sqrt{3}$} (2);
    \draw[arrow style] (2) edge node[right] {$\sqrt{7}$} (3);
    \draw[arrow style] (3) edge node[right] {$\sqrt{11}$} (4);
    \draw[arrow style] (4) edge node[left] {$\sqrt{19}$} (1);

    \draw[arrow style] (4) edge[bend right=30] node[right] {$\sqrt{17}$} (1);
    \draw[arrow style] (2) edge[bend left=30] node[right] {$\sqrt{5}$} (1);
    \draw[arrow style] (4) edge[bend left=30] node[left] {$\sqrt{13}$} (3);
    \draw[arrow style] (1) edge node[auto] {$\sqrt{2}$} (3);
\end{tikzpicture}
\caption{Graph with times}
\end{center}
\end{figure}
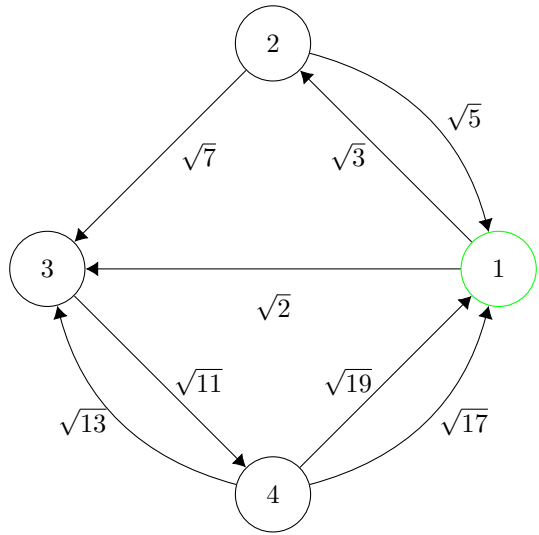
Let's notice, that all times are square roots of prime numbers, which means that they are linearly independent over $\mathbb{Q}$.
From theorem \eqref{lemma:main} it follows that:

\begin{align*}
    N(T) &= \left(\sum_{d \in D_{\beta}}\frac{\sum_{e \in E}{t(e)}}{(\beta - 1)!\prod_{i = 1}^{\beta}{t(d_{i})}}\right) \cdot T^{\beta - 1} + O(T^{\beta - 2}) \\
    \frac{N(T)}{T^{\beta - 1}} &= \sum_{d \in D_{\beta}}\frac{\sum_{e \in E}{t(e)}}{(\beta - 1)!\prod_{i = 1}^{\beta}{t(d_{i})}} + O\left(\frac{1}{T}\right) \\
\end{align*}

and

\begin{align*}
    \lim_{T \to +\infty}{\frac{N(T)}{T^{\beta - 1}}} = \sum_{d \in D_{\beta}}\frac{\sum_{e \in E}{t(e)}}{(\beta - 1)!\prod_{i = 1}^{\beta}{t(d_{i})}}
\end{align*}

Let's check that $\frac{N(T)}{T^{\beta - 1}}$ with sufficiently big $T$ is close to
$\sum_{d \in D_{\beta}}\frac{\sum_{e \in E}{t(e)}}{(\beta - 1)!\prod_{i = 1}^{\beta}{t(d_{i})}}$.

In our case $\sum_{d \in D_{\beta}}\frac{\sum_{e \in E}{t(e)}}{(\beta - 1)!\prod_{i = 1}^{\beta}{t(d_{i})}} \approx 0.000064826299$.
\begin{figure}[!htb]
\begin{center}
\includegraphics[scale=0.5]{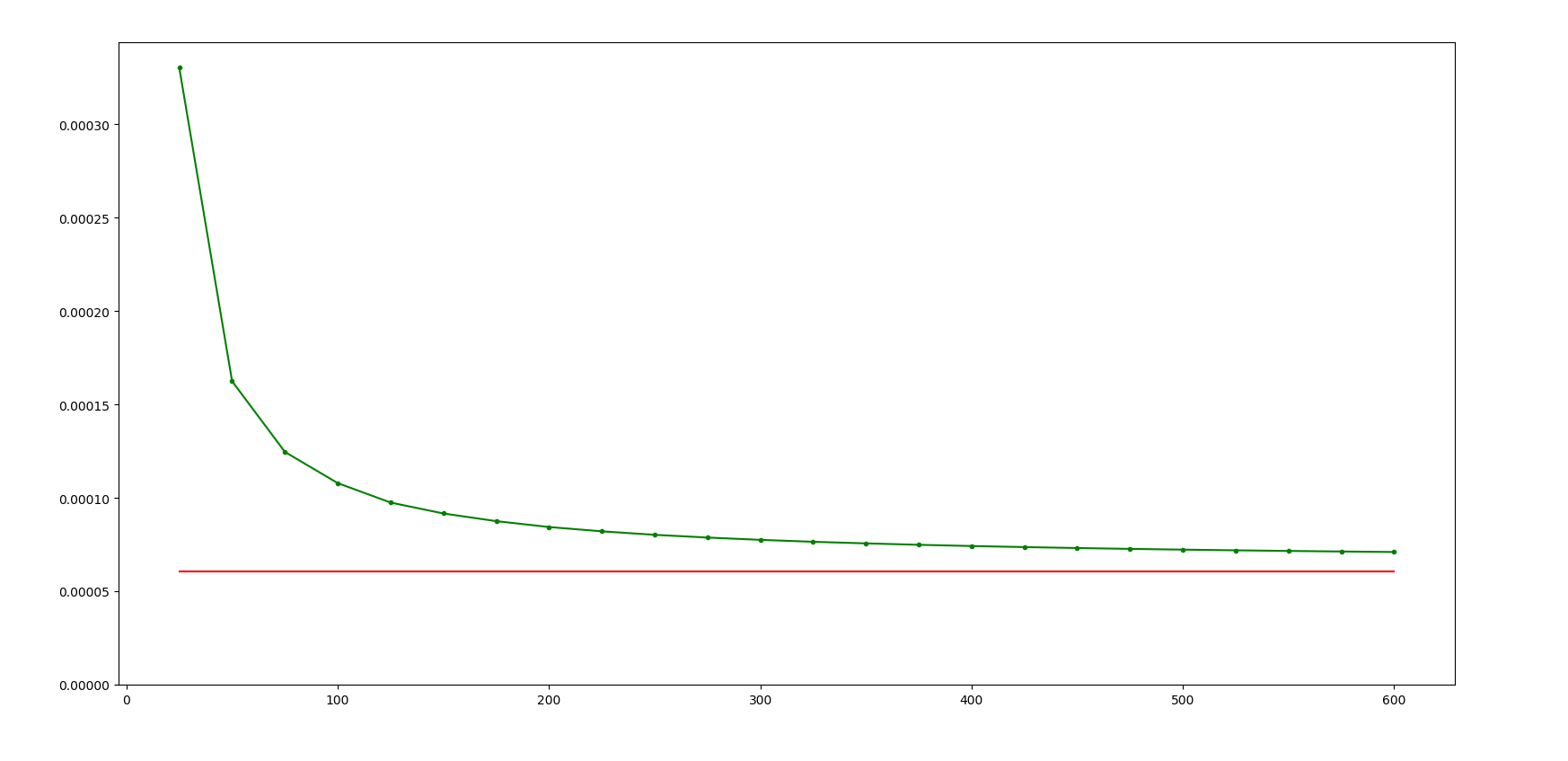}
\end{center}
\caption{Convergence of $\frac{N(T)}{T^{\beta - 1}}$}
\end{figure}

\FloatBarrier

Indeed, $\frac{N(T)}{T^{\beta - 1}}$ approaches $\sum_{d \in D_{\beta}}\frac{\sum_{e \in E}{t(e)}}{(\beta - 1)!\prod_{i = 1}^{\beta}{t(d_{i})}}$ as $T$ gets bigger.

\section{Conclusions}
We have found the asymptotics for the number of possible end positions of the random walk $N(T)$ as $T \to + \infty$ for oriented Hamiltonian graphs.
The leading coefficient is calculated by the formula $\sum_{d \in D_{\beta}}\frac{\sum_{e \in E}{t(e)}}{(\beta - 1)!\prod_{i = 1}^{\beta}{t(d_{i})}}$ and does not depend on the choice of the order on the edges, that is, $\sum_{d \in D_{\beta}}\frac{1}{\prod_{i = 1}^{\beta}{t(d_{i})}}$ is invariant under reordering on edges for Hamiltonian graphs. Understanding the nature of this invariant and studying $N(T)$ for directed strongly connected graphs that are not Hamiltonian may be a goal for further research.
\section{Acknowledgements}
D.\,V. Pyatko expresses gratitude to D.\,E. Volgin for providing computational resources for carrying out computer experiments (the results of which are consistent with the obtained theoretical calculations) and D.\,A. Polyakov for useful discussions. V.\,L. Chernyshev would like to thank A.\,A. Tolchennikov for helpful discussions.

\noindent
This study was partially supported by the RFBR grant 20-07-01103 a.

\bibliographystyle{plain}

\end{document}